\definecolor{completeblue}{RGB}{30,66,110}
\definecolor{rpred}{RGB}{150,30,30}
\definecolor{transgreen}{RGB}{20,95,80}
\newtheorem{theorem}{Theorem}
\newtheorem{lemma}{Lemma}
\newtheorem{proposition}{Proposition}
\newtheorem{corollary}{Corollary}
\theoremstyle{definition}
\newtheorem{definition}{Definition}
\newtheorem{example}{Example}
\newcommand{\twoheaduparrow}{\text{\ooalign{$\uparrow$\cr\raisebox{0.35ex}{$\uparrow$}\cr}}}
\DeclareMathOperator*{\argmax}{arg\,max}
\newcommand{\uphull}[1]{{#1}^{\uparrow}}
\newcommand{\Uphull}[1]{{#1}^{\twoheaduparrow}}
\title[Existence of Richter--Peleg representation]{Existence of Richter--Peleg representation\\for general preferences}
\author{Leandro Gorno \hspace{30pt} Paulo K. Monteiro\vspace{2pt}\\
\hspace{16pt} \textit{FGV EPGE} \hspace{74pt} \textit{FGV EPGE} \hspace{8pt} \vspace{2pt}\\
}
\date{First draft: December 2021. Current version: \today.}
\begin{document}

\onehalfspacing

\begin{abstract}
We characterize the binary relations that admit a Richter--Peleg representation, imposing neither completeness nor transitivity. A relation admits such a representation if and only if it is strongly acyclic and its transitive closure is separable, where separability means embeddability in a preorder possessing a countable separating stratification---a countable family of pairwise disjoint subsets, none containing a strictly ranked pair, that resolves every strict comparison. Separating stratifications generalize Debreu's order-density condition, and for complete preorders our theorem reduces to Debreu's. The embedding clause is the exact price of the stratification structure: indispensable in general, as a Richter--Peleg representable partial order admitting no countable separating stratification shows, but unnecessary whenever the representation is continuous and the lower contour sets are connected, which covers many standard incomplete preferences in economics. By providing existence, our results operationalize the optimization theory of White (1980), which we also extend by delivering the set of maximal elements of each subset with a single Richter--Peleg representation and no completeness assumption. As a second application, we show that the impossibility theorem of Basu and Mitra (2003) on intergenerational equity is exactly a failure of separability.
\vspace{15pt}\\
\noindent \textit{Keywords:} incomplete preferences, intransitive indifference, utility representation, maximal elements.\vspace{10pt}\\
\noindent \textit{JEL classifications:} C61, C62, D01, D11, D81.
\end{abstract}

\maketitle

\vspace{5pt}

\section{Introduction}
\label{sec:introduction}

Utility plays a central role in economic theory. In the classical framework, a utility representation is a real-valued function that fully characterizes the underlying preference. This notion of utility imposes strong restrictions on preferences: any preference admitting such a representation must be both complete and transitive.

When decision-makers face complex or unfamiliar alternatives, however, it is often unreasonable to require their preferences to be complete or even transitive. These axioms also frequently fail when the decision-making agent is a collective entity such as a firm, committee, or society, because aggregation procedures can produce preferences violating completeness or transitivity even when those of all individual members satisfy these axioms. These observations have motivated an extensive research program on alternative representations accommodating incomplete preferences, intransitive indifference, and other departures from classical rationality.

One prominent alternative is the \textit{Richter--Peleg representation}, a real-valued function preserving both indifference and strict preference. Richter--Peleg representations have several appealing properties. They do not require the underlying binary relation to be complete or its indifference relation to be transitive. They encode the ordinal information needed to apply scalar optimization techniques: in particular, when a Richter--Peleg representation exists, any maximizer of the representing function is undominated according to the underlying relation \citep{White1980,Alcantud2016}. At the same time, as \citet{MajumdarSen1976} emphasized, a Richter--Peleg representation does not permit reconstruction of the represented relation because incomparability cannot be distinguished from indifference by inspecting the values of the function.\footnote{\citet{MajumdarSen1976} work with an asymmetric relation $P$. They call the condition $x \mathbin{P} y \Rightarrow f(x) > f(y)$ the \textit{Peleg--Richter representation} and name any $f$ satisfying it a \textit{transparent quasi-representation}. That condition is the strict-preference clause of Definition~\ref{def:RP} below, stated for a strict relation taken as primitive.}

Sufficient conditions for the existence of Richter--Peleg representations were established in classical work by \citet{Richter1966}, in the context of rational choice; by \citet{Peleg1970}, for (strict) partial orders on topological spaces; and by \citet{Aumann1962,Aumann1964}, for preferences over lotteries. Subsequent research extended these ideas to multi-utility representations, in which the preference is described by a set of utility functions rather than a single one \citep{Ok2002,Dubra2004,EvrenOk2011,Riella2015,Gorno2017}, and to representations consisting exclusively of Richter--Peleg utilities \citep{Alcantud2016}.

The question of necessary and sufficient conditions for the existence of a Richter--Peleg representation of a preorder has also been studied. \citet[Proposition~3.1]{Herden1989a} established that a preorder admits a Richter--Peleg representation if and only if it possesses a countable family of pairs of disjoint decreasing-increasing subsets covering all strict comparisons.\footnote{Herden's terminology differs from ours: what he calls a \textit{utility function} on a preordered set is an isotone $f$ with $f(x) < f(y)$ whenever $y$ is strictly preferred to $x$---that is, a Richter--Peleg representation in the sense of Definition~\ref{def:RP}. He calls his condition \textit{almost weak separability}.} Equivalent formulations in terms of countable separating families of increasing subsets appear in \citet{HackBraunGottwald2022,HackBraunGottwald2023}. These conditions are of a \emph{covering} type: the strict comparisons are caught collectively by a countable family of set-pairs, each strict pair witnessed by some member. Our condition is instead one of \emph{separation} in the sense of Debreu's order-denseness, which it directly generalizes: a single countable family resolves every strict comparison by interposing a stratum between its terms, and for singleton strata in the complete case this is exactly the existence of an intermediate point. The distinction is not merely formal---it is what makes our characterization specialize to Debreu's classical condition in the complete case (Proposition~\ref{proposition:order_density}).

A characterization for binary relations that may fail transitivity, however, has remained elusive. Representations of a different kind are available for such relations: \citet{NishimuraOk2016} obtain maxmin and minmax multi-utility representations for reflexive relations on metric spaces, continuous when the relation is continuous and the space is compact. Their representations are two-way, fully encoding the relation including its incomparabilities, and rely on topological structure. Ours is a one-way scalar Richter--Peleg representation, characterized by purely order-theoretic conditions. Moreover, the existing characterizations of Richter--Peleg representability, while structurally clean, do not have a direct conceptual parallel with the classic order-density condition of \citet{Debreu1954} for complete preorders---namely, the existence of a countable subset $Z \subseteq X$ such that every strict preference $x \succ y$ admits an intermediate $z \in Z$.

This paper closes both gaps. We provide a complete characterization of Richter--Peleg representability for arbitrary binary relations. Our main result (Theorem~\ref{theorem:main}) establishes that a binary relation $\succsim$ on a nonempty set $X$ admits a Richter--Peleg representation if and only if $\succsim$ is \textit{strongly acyclic} and its transitive closure is \textit{separable}.\footnote{The word ``separable'' is overloaded in this literature; see Definition~\ref{def:separability} for our precise usage, which differs from that of \citet{Herden1989a} (weak and almost weak separability), \citet{Peleg1970} (strict order-density), and \citet{Ok2002} (upper/lower separability), and from the topological notion. Our $\succsim$-density (Section~\ref{sec:debreu}) uses \emph{weak} interposition $x \succsim z \succsim y$, after \citet{Debreu1954}, whereas Peleg's and Ok's use strict interposition.} Strong acyclicity rules out weak-preference cycles containing a strict link; separability requires the relation to be embeddable in a preorder possessing a countable \textit{separating stratification}.

The notion of a stratification is the conceptual centerpiece of our framework. A stratification is a collection of pairwise disjoint subsets of $X$, each of which contains no pair of strictly ranked alternatives. Under completeness, the strata of a stratification reduce to subsets of indifference classes, so stratifications generalize indifference classes to settings where completeness fails. We further show (Proposition~\ref{proposition:order_density}) that, for complete preorders, the existence of a countable separating stratification coincides with the existence of a countable $\succsim$-dense subset of $X$ in Debreu's sense. Countable separating stratifications therefore generalize Debreu's classical density condition, with strata standing in for indifference classes.

Our framework is, for preorders, logically equivalent to that of \citet{Herden1989a} and \citet{HackBraunGottwald2022}: each yields the same class of representable preorders. \citet{Alcantud2016} establish a companion result of a different kind, showing that this class coincides with the class of preorders admitting Richter--Peleg \emph{multi}-utility representations. Three features of Theorem~\ref{theorem:main} distinguish our contribution.

The first is the extension beyond preorders to arbitrary binary relations. Strong acyclicity is the behavioral axiom that supports this extension: it ensures that passage to the transitive closure preserves all strict-preference information, so the analysis can be reduced to the preorder case. The scope of the extension deserves to be stated plainly. Strong acyclicity is weaker than transitivity, but it is not innocuous: it fails whenever a strict preference is linked by a chain of indifferences, so complete semiorders and the related models of intransitive indifference in the tradition of \citet{Luce1956} lie outside our reach (Section~\ref{sec:discussion}). What it accommodates is the case in which the recorded relation is a fragment of an underlying preorder---because the analyst observes only some comparisons, or an aggregation procedure returns only some---and it is exactly the condition under which a scalar index consistent with every recorded comparison can exist.

The second is the structural parallel with Debreu's classical theory. The existing conditions ask for a countable family of increasing sets, unconstrained relative to one another; ours asks for a family that is in addition pairwise disjoint and rank-coherent. That extra structure is what a stratum needs in order to behave like an indifference class, and it is what makes the condition specialize, under completeness, to Debreu's order-density, and Theorem~\ref{theorem:main} to Debreu's representation theorem (Proposition~\ref{proposition:order_density}). Theorem~\ref{theorem:impossibility} measures its cost.

The third is an impossibility result (Theorem~\ref{theorem:impossibility}) exhibiting a partial order that admits a Richter--Peleg representation but no countable separating stratification. This shows that the embedding clause in our definition of separability is not a matter of mathematical convenience: it reflects a structural feature of Richter--Peleg representability.

Two corollaries highlight the usefulness of Theorem~\ref{theorem:main}. When $\succsim$ is a preorder (Corollary~\ref{corollary:preorder}), it admits a Richter--Peleg representation if and only if it is separable. When $X$ is countable (Corollary~\ref{corollary:countable}), strong acyclicity alone suffices.

Section~\ref{sec:application} develops two applications. The first refines the optimization theory of Richter--Peleg representations. Building on \citet{White1980}, we show (Proposition~\ref{proposition:optimality1}) that, when a preorder $\succsim$ admits a Richter--Peleg representation, for each subset $A \subseteq X$ there is a Richter--Peleg representation of $\succsim$ whose maximizers over $A$ are exactly the maximal elements of $\succsim$ in $A$; and (Proposition~\ref{proposition:optimality2}) that every alternative maximal in some set is the maximizer, over every such set simultaneously, of a single Richter--Peleg representation. These results extend White's theorem, which delivers a single representation only under completeness, and Example~\ref{example:A_dependence} shows that the resulting dependence on $A$ in Proposition~\ref{proposition:optimality1} cannot be dispensed with. The second application concerns intergenerational equity: a social welfare function satisfying strong Pareto and finite anonymity is precisely a Richter--Peleg representation of the Suppes--Sen grading preorder, so the impossibility theorem of \citet{BasuMitra2003} says exactly that this preorder is not separable.

The remainder of the paper is organized as follows. Section~\ref{sec:preliminaries} collects preliminaries. Section~\ref{sec:RPRepresentation} defines the Richter--Peleg representation, presents basic properties and strong acyclicity, and relates strong acyclicity to other cycle-restricting notions in the literature. Section~\ref{sec:separability} introduces stratifications, embeddings, and the formal definition of separability. Section~\ref{sec:main_result} contains the main theorem, the comparison between countable separating stratifications and Debreu's density condition, the impossibility result establishing the necessity of the embedding clause, the relation to existing characterizations for preorders, and the corollaries. Section~\ref{sec:application} develops the applications. Section~\ref{sec:discussion} concludes with a discussion of the reach of our results, related revealed-preference considerations, and continuity. All proofs are in the Appendix.

\section{Preliminaries}
\label{sec:preliminaries}

Let $X$ be a nonempty set. A \textit{binary relation on} $X$ is any subset $\succsim \, \subseteq \, X \times X$. As usual, for any $x,y \in X$, we write $(x,y) \in \hspace{3pt} \succsim$ as $x \succsim y$. The \textit{symmetric}, \textit{asymmetric}, and \textit{incomparable} parts of $\succsim$ are defined as
\[
\sim \hspace{2pt} := \left\{(x,y) \in X \times X \, \middle| \, x \succsim y \text{ and } y \succsim x\right\},
\]
\[
\succ \hspace{2pt} := \left\{(x,y) \in X \times X \, \middle| \, x \succsim y \text{ and not } y \succsim x\right\},
\]
\[
\bowtie \hspace{2pt} := \left\{(x,y) \in X \times X \, \middle| \, \text{neither } x \succsim y \text{ nor } y \succsim x\right\},
\]
respectively.

The binary relation $\succsim$ is \textit{reflexive} if $x \succsim x$ for all $x \in X$; \textit{complete} if $x \succsim y$ or $y \succsim x$ for all $x, y \in X$; \textit{transitive} if $x \succsim y$ and $y \succsim z$ imply $x \succsim z$ for all $x,y,z \in X$; and \textit{antisymmetric} if $x \succsim y$ and $y \succsim x$ imply $x = y$. A \textit{preorder} is a reflexive and transitive binary relation, and a \textit{partial order} is an antisymmetric preorder.\footnote{Some authors, including \citet{Peleg1970}, reserve ``partial order'' for an irreflexive and transitive relation, that is, for what we would call the asymmetric part of a partial order. We follow the reflexive convention throughout.} A set $B \subseteq X$ is \textit{increasing} if $x \in B$ and $y \succsim x$ imply $y \in B$, and \textit{decreasing} if its complement is increasing. A function $u : X \to \mathbb{R}$ is \textit{isotone} if $x \succsim y$ implies $u(x) \geq u(y)$.

We use repeatedly the following elementary consequence of transitivity, applied below both to $\succsim$ and to other transitive relations: if $\succsim$ is transitive and $x,y,z \in X$, then 
\begin{equation} \label{eq:strict_composition} x \succsim y \succ z \quad \text{or} \quad x \succ y \succsim z \qquad \Longrightarrow \qquad x \succ z. 
\end{equation} 
Transitivity gives $x \succsim z$ in either case, while $z \succsim x$ would give $z \succsim y$ in the first and $y \succsim x$ in the second.

The \textit{transitive closure} of $\succsim$, denoted $\succsim^t$, is the smallest preorder containing $\succsim$; explicitly, $x \succsim^t y$ if and only if $x = y$, $x \succsim y$, or there exist $x_1, \ldots, x_n \in X$ with $x \succsim x_1 \succsim x_2 \succsim \ldots \succsim x_n \succsim y$. By construction, $\succsim^t$ is reflexive and transitive and satisfies $\succsim \, \subseteq \, \succsim^t$. Imposing reflexivity in this way is innocuous for our purposes: by Lemma~\ref{lemma:reflexivity} below, adjoining the diagonal to a relation leaves its set of Richter--Peleg representations unchanged. The relations $\sim^t$ and $\succ^t$ denote the symmetric and asymmetric parts of $\succsim^t$.

\section{Richter--Peleg representation}
\label{sec:RPRepresentation}

\subsection{Definition and basic properties}
\label{sec:RP_definition}

\begin{definition}[Richter--Peleg representation]
\label{def:RP}
Let $\succsim$ be a binary relation on a nonempty set $X$. A function $u: X\to \mathbb{R}$ is a \textit{Richter--Peleg representation} of $\succsim$ if it satisfies:
\begin{enumerate}
\item $x\sim y$ implies $u(x)=u(y)$, and
\item $x \succ y$ implies $u(x)>u(y)$,
\end{enumerate}
for all $x,y \in X$.
\end{definition}

A Richter--Peleg representation $u$ of $\succsim$ preserves both indifference and strict preference, and is in particular isotone: $x \succsim y$ means $x \sim y$ or $x \succ y$, so $u(x) \geq u(y)$ in either case. The representation does not, however, in general allow $\succsim$ to be recovered from $u$ alone. In particular, the function values cannot distinguish indifferent pairs from incomparable pairs: if $u(x) = u(y)$, the underlying relation may have either $x \sim y$ or $x \bowtie y$, and the function carries no information sufficient to discriminate between these two cases. In this sense, \citet{MajumdarSen1976} called Richter--Peleg representations ``transparent quasi-representations'' since, while they respect the strict preference structure, they do not fully encode the binary relation.

This information loss has motivated a complementary line of research on \textit{multi-utility representations}, in which a preorder is represented by a family $\mathcal{U}$ of real-valued functions such that $x \succsim y$ if and only if $u(x) \geq u(y)$ for every $u \in \mathcal{U}$ \citep{Ok2002,EvrenOk2011}. A multi-utility representation fully encodes $\succsim$ and allows incomparable pairs to be detected through disagreement among the representing functions. The two approaches are complementary: a multi-utility representation encodes more information but lacks the immediate scalar-optimization interpretation of a Richter--Peleg representation. \citet[Theorem~3.1]{Alcantud2016} show that, for preorders, the existence of a Richter--Peleg representation is equivalent to the existence of a multi-utility representation each of whose members is itself a Richter--Peleg representation---a \emph{Richter--Peleg multi-utility}; the optimization results of Section~\ref{sec:application} exploit the Richter--Peleg property to reduce maximization of the underlying preorder to scalar optimization.

\subsection{Strong acyclicity}
\label{sec:strong_acyclicity}

The existence of a Richter--Peleg representation does not require $\succsim$ to be transitive, as the following example shows.

\begin{example}
\label{example:transitivity}
Let $X=\{a,b,c\}$ and define $\succsim \hspace{2pt} := \left\{(a,a),(b,b),(c,c),(a,b),(b,c)\right\}$. Then $\succsim$ is not transitive. However, the function $u:X \to \mathbb{R}$ given by $u(a)=2$, $u(b)=1$, and $u(c)=0$ is a Richter--Peleg representation of $\succsim$.
\end{example}

Richter--Peleg representability does, however, impose a definite restriction on the structure of $\succsim$. The function $u$ in Example~\ref{example:transitivity} respects the strict comparisons because $\succsim$ contains no cycles connecting strictly ranked alternatives. The following condition captures this requirement.

\begin{definition}[Strong acyclicity]
\label{def:strong_acyclicity}
A binary relation $\succsim$ on a set $X$ is \textit{strongly acyclic} if $x_1 \succsim x_2 \succsim \ldots \succsim x_n \succsim x_1$ implies $x_1 \sim x_2 \sim \ldots \sim x_n \sim x_1$ for all $x_1, x_2, \ldots, x_n \in X$.
\end{definition}

\begin{lemma}
\label{lemma:strong_acyclicity}
If a binary relation $\succsim$ admits a Richter--Peleg representation, then $\succsim$ is strongly acyclic.
\end{lemma}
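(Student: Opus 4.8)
The plan is to first record the elementary but crucial fact that any Richter-Peleg representation $u$ of $\succsim$ is \emph{weakly monotone}, i.e., $x \succsim y$ implies $u(x) \ge u(y)$. This follows by a two-case split: for any $x,y$ with $x \succsim y$, either $y \succsim x$ also holds, so $x \sim y$ and hence $u(x)=u(y)$ by condition (1); or $y \succsim x$ fails, so $x \succ y$ and hence $u(x) > u(y)$ by condition (2). In both cases $u(x) \ge u(y)$. I would emphasize that this step uses only the definitions of the symmetric and asymmetric parts, so no reflexivity, transitivity, or completeness of $\succsim$ is invoked.

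Next I would take an arbitrary chain $x_1 \succsim x_2 \succsim \cdots \succsim x_n \succsim x_1$ and apply weak monotonicity to each link, obtaining
\[
u(x_1) \ge u(x_2) \ge \cdots \ge u(x_n) \ge u(x_1).
\]
Since the first and last terms coincide, the squeeze forces every inequality in the chain to be an equality, so $u(x_1) = u(x_2) = \cdots = u(x_n)$.

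Finally, for each consecutive pair $(x_i, x_{i+1})$ (including the wrap-around pair $(x_n, x_1)$), I would argue by contradiction that $x_i \sim x_{i+1}$: if not, then since $x_i \succsim x_{i+1}$, the definition of the asymmetric part gives $x_i \succ x_{i+1}$, whence $u(x_i) > u(x_{i+1})$ by condition (2), contradicting the equality of all the $u$-values. Hence $x_1 \sim x_2 \sim \cdots \sim x_n \sim x_1$, which is precisely strong acyclicity.

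There is no real obstacle here; the only point deserving a moment's care is the first step — observing that $\succsim$ decomposes as the disjoint union of $\sim$ and $\succ$ regardless of any coherence properties of $\succsim$ — after which the argument is just a short monotonicity-and-squeeze computation on real numbers.
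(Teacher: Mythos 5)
Your proposal is correct and follows essentially the same route as the paper's proof: derive $u(x_1)\ge\cdots\ge u(x_n)\ge u(x_1)$, squeeze to equality, and then rule out any strict comparison $x_i \succ x_{i+1}$ by contradiction with condition (2). The only difference is that you spell out the weak-monotonicity step ($x\succsim y$ implies $u(x)\ge u(y)$ via the $\sim$/$\succ$ dichotomy) explicitly, which the paper leaves implicit; this is a harmless and arguably welcome clarification.
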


Violating strong acyclicity thus provides a simple way to construct binary relations that fail to admit a Richter--Peleg representation:

\begin{example}
\label{example:strict_cycle}
Let $X = \{a,b,c\}$ and
\[
\succsim \hspace{2pt} = \left\{(a,a),(b,b),(c,c),(a,b),(b,c),(c,a)\right\}.
\]
$\succsim$ is not strongly acyclic and thus cannot admit a Richter--Peleg representation.
\end{example}

Strong acyclicity is necessary but not sufficient for the existence of a Richter--Peleg representation. The following classic example shows that even a complete and transitive binary relation on a compact and connected space may fail to admit such a representation.

\begin{example}[Lexicographic preferences]
\label{example:lex}
Let $X = [0,1]\times [0,1]$ and
\[
\succsim \hspace{2pt} = \left\{((x_1, x_2), (y_1, y_2)) \in X \times X \, \middle| \, x_1 > y_1 \vee \left[x_1 = y_1 \wedge x_2 \geq y_2\right]\right\}.
\]
Then $\succsim$ cannot admit a Richter--Peleg representation. Were $u$ one, then for each $t \in [0,1]$ the open interval $I_t := \big(u(t,0), u(t,1)\big)$ would be nonempty because $(t,1) \succ (t,0)$, and the family $\{I_t\}_{t \in [0,1]}$ would be pairwise disjoint because $t < t'$ implies $(t',0) \succ (t,1)$. But an uncountable family of pairwise disjoint nonempty open intervals in $\mathbb{R}$ is impossible, since each contains a distinct rational.
\end{example}

\subsection{Other notions of acyclicity}
\label{sec:related_acyclicity}

Strong acyclicity is one of several cycle-restricting notions used in the literature on binary relations. We review the most relevant alternatives, collect them into an equivalence result, and discuss our reasons for working with strong acyclicity.

A binary relation $\succsim$ is \textit{acyclic} if $x_1 \succ x_2 \succ \ldots \succ x_n$ implies $x_n \not\succ x_1$ for all $x_1, \ldots, x_n \in X$ \citep{Walker1977}. Acyclicity rules out cycles of strict preference but admits cycles in $\succsim$ that contain a strict link as long as that link lies on a cycle threaded through indifferent alternatives. Acyclicity is therefore strictly weaker than strong acyclicity:

\begin{example}
\label{example:acyclic_not_strong}
Let $X = \{a, b, c\}$ and \[ \succsim \hspace{2pt} := \left\{(a,a),(b,b),(c,c),(a,b), (b,a), (b,c), (c,b), (c,a)\right\}. \] Then $\succ = \{(c,a)\}$, so $\succsim$ is acyclic. However, $a \succsim b \succsim c \succsim a$ while $c \not\sim a$, so $\succsim$ is not strongly acyclic.
\end{example}

The difference between acyclicity and strong acyclicity vanishes for antisymmetric relations, where the symmetric part $\sim$ is trivial. The two conditions diverge precisely when nontrivial indifference classes are present.

\citet{Suzumura1976} introduced the notion of \textit{consistency} of a binary relation. A binary relation $\succsim$ is \textit{Suzumura consistent} if there is no chain $x_1 \succ x_2 \succsim x_3 \succsim \ldots \succsim x_n \succsim x_1$. Suzumura consistency rules out cycles in $\succsim$ that include at least one strict link. \citet{Suzumura1976} shows (his Theorem~3) that consistency is equivalent to the existence of a complete preorder extending $\succsim$ and preserving its strict preference structure; the book-length treatment is \citet{BossertSuzumura2010}.

\citet[Chapter~2]{ChambersEchenique2016} consider an \emph{order pair} $(R, P)$ with $P \subseteq R$ and call it \emph{acyclic} if there is no chain $x_1 \mathbin{R} x_2 \mathbin{R} \ldots \mathbin{R} x_n \mathbin{P} x_1$. Specializing to $R := \, \succsim$ and $P := \, \succ$ gives a condition on a single binary relation. To avoid collision with the unqualified ``acyclic'' used above, we refer to this property of $(\succsim,\succ)$ as \emph{pair-acyclicity}. 

A fourth condition is structural rather than behavioral. The transitive closure $\succsim^t$ of $\succsim$ always satisfies $\succsim \, \subseteq \, \succsim^t$. One can ask in addition whether $\succsim^t$ is an \textit{extension} of $\succsim$, in the sense that $x \succ y$ implies $x \succ^t y$ for all $x, y \in X$.

The four conditions just introduced---strong acyclicity, Suzumura consistency, pair-acyclicity of $(\succsim, \succ)$, and the extension property of the transitive closure---are mutually equivalent.

\begin{proposition}
\label{proposition:equivalence}
For any binary relation $\succsim$ on a nonempty set $X$, the following statements are equivalent:
\begin{enumerate}
\item $\succsim$ is strongly acyclic;
\item $\succsim$ is Suzumura consistent;
\item $(\succsim, \succ)$ is pair-acyclic;
\item the transitive closure $\succsim^t$ is an extension of $\succsim$.
\end{enumerate}
\end{proposition}

Suzumura consistency and pair-acyclicity differ only in where the strict link is placed along the cycle, so their equivalence is immediate; the substantive content of Proposition~\ref{proposition:equivalence} is that both coincide with strong acyclicity and with the extension property of the transitive closure. The four formulations nonetheless offer distinct perspectives on the same feature: the first two make the asymmetric part $\succ$ explicit, the transitive-closure formulation is purely structural, and strong acyclicity is stated directly as a property of the single primitive relation $\succsim$. We work with strong acyclicity in what follows because the primitive of choice theory is typically a single binary relation, and a condition stated directly on that relation is the most useful for axiomatic analysis. The equivalence above ensures that this choice entails no loss of generality.

\section{Separable binary relations}
\label{sec:separability}

This section introduces the structural concepts used in the statement of our main result.

\subsection{Stratifications}
\label{sec:stratifications}

\begin{definition}[Stratification]
\label{def:stratification}
Let $\succsim$ be a binary relation on a nonempty set $X$. A collection $\mathcal{A}$ of nonempty, pairwise disjoint subsets of $X$ is a \textit{stratification} of $\succsim$ if, for every $A, A' \in \mathcal{A}$, $x, y \in A$, and $x', y' \in A'$, $x \succ x'$ implies $y' \not\succ y$.
\end{definition}

The definition implies that $\succsim$ induces an asymmetric relation among the strata: if some element of $A$ is strictly preferred to some element of $A'$, then no element of $A'$ is strictly preferred to any element of $A$. In particular, no stratum contains a pair of strictly ranked alternatives.\footnote{This follows from Definition~\ref{def:stratification} by setting $A=A'$, $x=y'$, and $y=x'$, so that $x \succ y$ implies $x \not\succ y$.} The strata thus play the role of indifference classes in a setting where indifference need not be transitive: each stratum gathers alternatives that the relation treats as ``of the same rank,'' even when they are not literally indifferent.

A stratification can be very coarse---taking the whole of $X$ as a single stratum is admissible whenever the resulting set contains no strict pair---or as fine as the family of all singletons. To be useful, however, a stratification must witness the strict preferences of $\succsim$.

\begin{definition}[Separating stratification]
\label{def:separating_stratification}
A stratification $\mathcal{A}$ of $\succsim$ is \textit{separating} if, for all $x, y \in X$ with $x \succ y$, there are $A \in \mathcal{A}$ and $z \in A$---a \textit{witness} for the pair $(x,y)$---such that $x \succsim z$, and moreover $y \not\succ z'$ for every $z' \in A$.
\end{definition}

Separation is most transparent in terms of increasing hulls. For $A \subseteq X$, let
\[
\uphull{A} := \left\{x \in X \, \middle| \, x \succsim z \text{ for some } z \in A\right\}, \qquad \Uphull{A} := \left\{x \in X \, \middle| \, x \succ z \text{ for some } z \in A\right\}
\]
denote the \textit{weak} and \textit{strict increasing hulls} of $A$. Then $\mathcal{A}$ is separating if and only if, for every $x \succ y$, some $A \in \mathcal{A}$ satisfies $x \in \uphull{A}$ and $y \notin \Uphull{A}$. This is the natural generalization of Debreu's order-denseness: for a singleton stratum $A = \{z\}$ in a complete preorder the condition reduces to $x \succsim z$ and $z \succsim y$---the interposition of an intermediate point between $x$ and $y$---and a separating stratification lifts this requirement to relations that need not be complete, with strata in place of single interposed points. It is essential that $y \not\succ z'$ hold for \emph{every} $z' \in A$, not merely for a witness: were it required of the witness alone, a stratum could separate through a good witness $z$ while harboring some $w$ with $y \succ w$, which places $y$ in $\Uphull{A}$ and defeats the representation built in the proof of Theorem~\ref{theorem:main}. Finally, when $\succsim$ is transitive both $\uphull{A}$ and $\Uphull{A}$ are increasing sets, so a countable separating stratification $\{A_i\}$ induces the countable family of increasing sets $\left\{\uphull{A_i}\right\} \cup \left\{\Uphull{A_i}\right\}$ discussed in Section~\ref{sec:related_characterizations}.

Two examples fix intuition: the first shows that the separating requirement has real content, the second that separating stratifications always exist in a trivial form.

\begin{example}
\label{example:two_stratifications}
Let $X=\{a,b,c\}$ and let $\succsim \hspace{2pt} := \left\{(a,c),(b,c),(a,a),(b,b),(c,c)\right\}$. Then both $\mathcal{A}_1 := \left\{\{a\}\right\}$ and $\mathcal{A}_2 := \left\{\{a,b\}\right\}$ are stratifications of $\succsim$. However, only $\mathcal{A}_2$ is separating: it separates $a \succ c$ (taking $z = a$, since $c \not\succ a$ and $c \not\succ b$) and $b \succ c$ (taking $z = b$), whereas $\mathcal{A}_1$ fails to separate $b \succ c$, which would require $b \succsim a$.
\end{example}

\begin{example}
\label{example:separating_singletons}
For any binary relation $\succsim$, the collection $\left\{\{x\} \, \middle| \, x \in X\right\}$ of all singletons is a separating stratification of $\succsim$. It is a stratification: the singletons are pairwise disjoint, and for $A = \{a\}$ and $A' = \{a'\}$ the requirement of Definition~\ref{def:stratification} reads ``$a \succ a'$ implies $a' \not\succ a$'', which is asymmetry of $\succ$. It is separating: given $x \succ y$, the singleton $\{y\}$ satisfies $x \succsim y$ and $y \not\succ y$, the latter by asymmetry of $\succ$.
\end{example}

The role of separating stratifications is to summarize the order structure of $\succsim$ in a more manageable object. Thus, the singleton stratification of the last example is rarely useful on its own because it is as large as $X$.

\subsection{Embeddings and separability}
\label{sec:embeddings}

Example~\ref{example:separating_singletons} shows that every reflexive binary relation admits a separating stratification. The characterization developed in the next section will, however, require the existence of a separating stratification that is also countable. As we shall see, ensuring that a countable separating stratification exists may require ``enriching'' $\succsim$ with additional alternatives. Embeddings provide the formal device for this enrichment.

\begin{definition}[Embedding]
\label{def:embedding}
Let $X$ and $X^*$ be nonempty sets and let $\succsim$ and $\succsim^*$ be binary relations on $X$ and $X^*$, respectively. We say that $\succsim^*$ is an \textit{embedding} of $\succsim$ if there exists a function $f: X \to X^*$ such that $x \succsim y$ if and only if $f(x) \succsim^* f(y)$ for all $x, y \in X$.
\end{definition}

\begin{definition}[Separability]
\label{def:separability}
A binary relation $\succsim$ is \textit{separable} if it can be embedded in a preorder with a countable separating stratification.
\end{definition}

The embedding enlarges the domain with auxiliary alternatives, facilitating the separation of all strict comparisons with countably many strata.

\begin{example}
\label{example:separable}
Let $X = \mathbb{R} \setminus \mathbb{Q}$ be the set of irrational numbers and let $\succsim$ be the natural order on $X$. Taking $X^* = \mathbb{R}$ and $\succsim^*$ the natural order on $X^*$, the relation $\succsim^*$ is an embedding of $\succsim$ via the identity function, and $\left\{\{q\} \, \middle| \, q \in \mathbb{Q}\right\}$ is a countable separating stratification of $\succsim^*$. Hence $\succsim$ is separable.
\end{example}

Separability is a demanding requirement: it forces the relation to be a preorder.

\begin{lemma}
\label{lemma:transitivity_embedding}
Every binary relation that can be embedded in a preorder is a preorder.
\end{lemma}

In particular, every separable binary relation is a preorder, and hence strongly acyclic (for any weak-preference cycle through a preorder collapses to indifference).

\section{Main result}
\label{sec:main_result}

The main result of this paper provides general necessary and sufficient conditions for the existence of a Richter--Peleg representation.

\begin{theorem}
\label{theorem:main}
Let $X$ be a nonempty set and let $\succsim$ be a binary relation on $X$. Then $\succsim$ admits a Richter--Peleg representation if and only if $\succsim$ is strongly acyclic and the transitive closure of $\succsim$ is separable.
\end{theorem}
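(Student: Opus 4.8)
The plan is to prove the two directions separately, with the ``only if'' direction being essentially a packaging of earlier observations and the ``if'' direction carrying the real content. For necessity, suppose $u$ is a Richter-Peleg representation of $\succsim$. Strong acyclicity is immediate from Lemma~\ref{lemma:strong_acyclicity}. For the transitive closure $\succsim^t$, the first step is to check that $u$ is in fact a Richter-Peleg representation of $\succsim^t$ as well: a $\succsim^t$-cycle forces, via strong acyclicity of $\succsim$ together with the fact that $u$ is constant on $\sim$-classes and strictly increasing on $\succ$, that $u$ is constant along the cycle, so the indifference and strict parts of $\succsim^t$ are still respected by $u$. Once $u$ represents the preorder $\succsim^t$, I would exhibit a countable separating stratification of an embedding of $\succsim^t$ directly from $u$: embed $\succsim^t$ into $(\mathbb{R}, \geq)$ via $u$ (this is a genuine embedding only after quotienting out the kernel of $u$, so one either passes to the quotient or enlarges the target; either works), and then take a countable set of rationals that is order-dense in $u(X)$ as the singletons forming the stratification. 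Verifying that this collection is a separating stratification of $(\mathbb{R},\geq)$ is routine: strata are singletons so the pseudo-stratification condition is trivial, and density gives the separating property. Hence $\succsim^t$ is separable.

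The substantive direction is sufficiency: assume $\succsim$ is strongly acyclic and $\succsim^t$ is separable, and construct a Richter-Peleg representation. By separability there is a preorder $\succsim^*$ on some $X^*$ with a countable separating stratification $\mathcal{A} = \{A_n\}_{n \in \mathbb{N}}$ and an embedding $f\colon X \to X^*$ with $x \succsim^t y \iff f(x) \succsim^* f(y)$. The idea is to build a utility for $\succsim^*$ out of the countable family $\mathcal{A}$ and then pull it back along $f$. For each $n$, pick a representative $a_n \in A_n$ and define an indicator-type function $v_n\colon X^* \to \{-1,0,1\}$ that records whether a point lies strictly above, is incomparable-or-indifferent to, or lies strictly below the stratum $A_n$ — one has to be careful that this is well-defined on the whole stratum, which is exactly what the pseudo-stratification property guarantees (strictly ranked points cannot be compared to the same stratum in opposite ways). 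Then set $v := \sum_n 2^{-n} v_n$, a bounded real-valued function on $X^*$. The separating property of $\mathcal{A}$ is what ensures that whenever $x^* \succ^* y^*$ there is some stratum $A_n$ that $v_n$ uses to separate them (i.e. $v_n(x^*) > v_n(y^*)$), while for every other $m$ we only need $v_m(x^*) \geq v_m(y^*)$, which follows from transitivity of $\succsim^*$; hence $x^* \succ^* y^*$ implies $v(x^*) > v(y^*)$. Likewise $x^* \sim^* y^*$ implies $v_n(x^*) = v_n(y^*)$ for all $n$, so $v(x^*) = v(y^*)$. Finally $u := v \circ f$ is a Richter-Peleg representation of $\succsim^t$, and since $\succsim \subseteq \succsim^t$ with the asymmetric part of $\succsim$ contained in that of $\succsim^t$ and the symmetric part of $\succsim$ contained in that of $\succsim^t$ (here strong acyclicity of $\succsim$ is used to control how $\sim$ and $\succ$ sit inside $\succsim^t$), $u$ is also a Richter-Peleg representation of $\succsim$ itself.

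The main obstacle I anticipate is the bookkeeping around the passage between $\succsim$ and its transitive closure $\succsim^t$ in the sufficiency argument — specifically, confirming that a Richter-Peleg representation of $\succsim^t$ is automatically one of $\succsim$. This is not purely formal: $x \succ y$ in $\succsim$ could conceivably become $x \sim^t y$ in the closure if there is a $\succsim$-path from $y$ back to $x$, and one must rule this out. Strong acyclicity is precisely the hypothesis that does so — a $\succsim$-path from $y$ to $x$ together with $x \succsim y$ is a cycle, forcing $y \succsim x$, contradicting $x \succ y$ — so $\succ \,\subseteq\, \succ^t$ and $\sim \,\subseteq\, \sim^t$, and the implication goes through. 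A secondary, more technical point is getting the embedding/quotient step in the necessity direction exactly right, since an embedding must reflect as well as preserve the relation; I would handle this by embedding into $(u(X), \geq)$ or, if one insists on singleton strata drawn from outside $u(X)$, into $(\mathbb{R},\geq)$ after first collapsing $u$-level sets, and then choosing the countable order-dense set accordingly.
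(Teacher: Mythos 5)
There is a genuine gap, and it sits exactly where the paper's real work is: the necessity of separability. You claim that a Richter--Peleg representation $u$ of the transitive closure $\succsim^t$ embeds $\succsim^t$ into $(\mathbb{R},\geq)$ (after quotienting by the kernel of $u$). This is false unless $\succsim^t$ is complete. An embedding requires $x \succsim^t y$ \emph{if and only if} $u(x)\geq u(y)$, whereas a Richter--Peleg representation delivers only the forward implications: for any incomparable pair $x \bowtie y$, the reals $u(x)$ and $u(y)$ are still ordered, so the relation pulled back from $(\mathbb{R},\geq)$ is complete while $\succsim^t$ need not be. Quotienting by the kernel of $u$ does not help (two incomparable alternatives with equal $u$-values would have to become equivalent), and restricting the target to $(u(X),\geq)$ changes nothing. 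In effect you have silently replaced ``Richter--Peleg representation'' by ``utility representation'', which trivializes precisely the case the theorem is about. The paper's proof of this direction (Proposition~\ref{proposition:preorder_case}) is the technical core of the argument: it embeds $\succsim^t$ into the set of $\succsim$-decreasing sequences, extends $u$ by taking limits, passes to the transitive closure of an auxiliary relation to obtain a preorder $\succsim^*$, and takes as strata the sets $A_i$ of $\succsim^*$-minimal elements of the upper level sets $B_i$ for $i$ ranging over a countable dense set containing the gap endpoints; even showing that every element of $B_i$ dominates some element of $A_i$ requires a nontrivial argument with maximal comparability domains (Lemma~\ref{lemma:minimal}). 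Your proposal contains no substitute for any of this, so the necessity direction is not proved.

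Your sufficiency direction does follow the paper's route: reduce to a preorder with a countable separating stratification via the embedding, build a weighted sum of stratum-based indicator functions (the paper's Lemma~\ref{lemma:sufficiency}), and use strong acyclicity to transfer the representation from $\succsim^t$ back to $\succsim$ (the paper's Lemma~\ref{lemma:transitive_closure_1}); your argument that $\succ\,\subseteq\,\succ^t$ under strong acyclicity is correct and is the right key point. One step you state too quickly is the claim that the separating property forces $v_n(x^*)>v_n(y^*)$ at the witnessing stratum: the witness is a single $z\in A_n$ with $x^*\succsim^* z$ and $y^*\not\succ^* z$, but a stratum may contain mutually incomparable elements, so $y^*$ could still lie strictly above some \emph{other} element of $A_n$, in which case your trichotomous $v_n$ gives $x^*$ and $y^*$ the same value at index $n$. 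The paper's version separates the weak and strict lower sets and argues by cases on whether $z\sim^* y^*$; your sketch needs an analogous, more careful verification before the strict inequality is secured.
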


Theorem~\ref{theorem:main} characterizes Richter--Peleg representability by two conditions of distinct character. The behavioral condition, strong acyclicity, rules out the cycles incompatible with any scalar representation of strict preference (Lemma~\ref{lemma:strong_acyclicity}). Given strong acyclicity, the analysis reduces to the preorder structure of the transitive closure, and the structural condition, separability, demands that this preorder be embeddable in another preorder admitting a countable family of strata that separate all strict pairs. The embedding clause is essential: as Theorem~\ref{theorem:impossibility} below shows, separability of a preorder is strictly weaker than the requirement that the preorder admits a countable separating stratification.

\subsection{Stratifications and order-denseness}
\label{sec:debreu}

The existence of a countable separating stratification specializes, under completeness and transitivity, to a familiar condition in classical utility theory due to \citet{Debreu1954}. Recall that a subset $Z \subseteq X$ is \textit{$\succsim$-dense} if, for all $x, y \in X$ with $x \succ y$, there exists $z \in Z$ such that $x \succsim z \succsim y$.\footnote{The weak interposition $x \succsim z \succsim y$ is Debreu's. \citet{Peleg1970} and \citet{Ok2002} use the strict variant $x \succ z \succ y$, and \citet{Ok2002} attaches the label ``$\succsim$-dense'' to it; the two differ exactly on jumps.}

\begin{proposition}
\label{proposition:order_density}
Let $\succsim$ be a binary relation on a nonempty set $X$.
\begin{enumerate}[(i)]
\item If $Z \subseteq X$ is a $\succsim$-dense subset, then the collection of singletons $\left\{\{z\} \, \middle| \, z \in Z\right\}$ is a separating stratification of $\succsim$.
\item If, in addition, $\succsim$ is a complete preorder, $\mathcal{A}$ is a separating stratification of $\succsim$, and $z_A \in A$ is an arbitrary representative for each $A \in \mathcal{A}$, then $\left\{z_A \, \middle| \, A \in \mathcal{A}\right\}$ is a $\succsim$-dense subset of $X$.
\end{enumerate}
Consequently, for a complete preorder the least cardinality of a separating stratification of $\succsim$ equals the least cardinality of a $\succsim$-dense subset of $X$.
\end{proposition}

Proposition~\ref{proposition:order_density} formalizes a back-and-forth construction between $\succsim$-dense subsets and separating stratifications that is free of any cardinality loss in either direction: the singletons of a $\succsim$-dense set are a separating stratification, and one representative per stratum recovers a $\succsim$-dense set. In particular, a complete preorder admits a countable separating stratification if and only if $X$ contains a countable $\succsim$-dense subset. Combined with Theorem~\ref{theorem:main} and Debreu's representation theorem \citep{Debreu1954,Fishburn1970}, this means that our characterization reduces exactly to Debreu's for complete preorders, since under completeness a Richter--Peleg representation is a utility representation in the usual sense. The reduction is structural rather than merely extensional: as noted in Section~\ref{sec:stratifications}, for a singleton stratum in a complete preorder the separating condition is exactly Debreu's interposition of an intermediate point.

Part~(i) uses neither completeness nor transitivity, so a countable $\succsim$-dense subset is sufficient for a countable separating stratification---and hence, for preorders, for Richter--Peleg representability---quite generally; only the converse needs completeness. Under incompleteness the two conditions genuinely part company: Theorem~\ref{theorem:impossibility} below exhibits a representable partial order with no countable $\succsim$-dense subset. 

The embedding clause in our definition of separability, essential in general, is redundant in the complete-preorder case: when $\succsim$ is a complete preorder, both representability and Debreu's order-denseness refer only to the relation on $X$ itself.

\subsection{The role of the embedding clause}
\label{sec:impossibility}

The definition of separability invokes both a stratification and an embedding. The next result shows that the embedding clause cannot be omitted, even for partial orders.

\begin{theorem}
\label{theorem:impossibility}
There exists a partial order $\succsim$ on a set $X$ that admits a Richter--Peleg representation but no countable separating stratification.
\end{theorem}

The construction takes $X$ to be the set of countable ordinals, equipped with the partial order defined by $x \succsim y$ if and only if $\varphi(x) \geq \varphi(y)$ and $x \preceq^* y$, where $\preceq^*$ is the natural well-order on the ordinals and $\varphi: X \to [0,1]$ is any injection. Thus $\succsim$ is the intersection of the $\varphi$-order with the \emph{reverse} of $\preceq^*$, and $\varphi$ is a Richter--Peleg representation of it. That no countable separating stratification exists rests on a cardinality argument; full details are in Appendix~\ref{appendix:proofs}.

The obstruction is not a lack of comparabilities but a mismatch of cardinality. Comparability requires the two orders to disagree, so incomparability requires them to agree: if $a$ and $b$ are incomparable with $\varphi(a) > \varphi(b)$, then $b \prec^* a$. On a stratum, which contains no strictly ranked pair, $\varphi$ is therefore isotone for the well-order, and an injective isotone map from a well-ordered set into $\mathbb{R}$ cannot have uncountable domain, again by the argument of Example~\ref{example:lex}. Every stratum is thus countable, the union of countably many of them is bounded in the well-order, and the strict comparisons lying above that bound are left with no witness below them. Embedding the relation in a richer preorder supplies the additional alternatives needed to form the missing strata.

This is precisely the role played by the \textit{shadow completion} used in our proof of Theorem~\ref{theorem:main}, illustrated in Figure~\ref{fig:shadow}. The obstacle to separating $x \succ y$ inside $X$ is the absence of a suitable interposed alternative: one ranked below $x$ but not above $y$. The shadow completion manufactures them. At every rational level $q$ strictly below $u(x)$ it adjoins a copy $(x,q)$ of $x$, ranked below $x$ and carrying utility $q$; copies of distinct originals sharing a level are left mutually incomparable, so the copies at level $q$ assemble into a single stratum $A_q$ containing no strictly ranked pair. Choosing $q$ between $u(y)$ and $u(x)$ then separates the pair, since $x$ lies above $(x,q) \in A_q$ while $y$ lies above nothing in $A_q$. Countably many levels suffice because $\mathbb{Q}$ is dense in $\mathbb{R}$, however large $X$ may be; this is precisely what no stratification of the original relation could achieve. This level construction is related in form---though not in purpose---to the formal-ball model used to represent metric spaces as continuous posets \citep{EdalatHeckmann1998,GierzEtAl2003}; we return to the comparison in the proof of Lemma~\ref{lemma:preorder_case}.

\begin{figure}[htbp]
\centering
\begin{tikzpicture}[scale=1.1]
\draw[->,black!50,thick] (0,0) -- (0,6.3);
\node[black!50] at (0,6.6) {$u^*$};
\draw[black!50,thick] (-0.15,5.4) -- (0.15,5.4);
\node[left,black!50] at (-0.15,5.4) {$u(x)$};
\draw[rpred,thick] (-0.15,3.2) -- (0.15,3.2);
\node[left,rpred] at (-0.15,3.2) {$r$};
\draw[black!50,thick] (-0.15,1.6) -- (0.15,1.6);
\node[left,black!50] at (-0.15,1.6) {$u(y)$};
\draw[dashed,rpred,thick] (0.9,3.2) -- (8.9,3.2);
\node[rpred,right] at (8.9,3.2) {$A_r$};
\draw[dotted,black!45] (2.3,0.3) -- (2.3,5.05);
\foreach \h in {0.5,1.05,1.95,2.6,3.7,4.85}
\node[circle,draw=black!55,fill=white,inner sep=0pt,minimum size=3.6pt] at (2.3,\h) {};
\node[rectangle,fill=rpred,draw=rpred,inner sep=0pt,minimum size=5pt] at (2.3,3.2) {};
\node[circle,fill=completeblue,draw=completeblue,inner sep=0pt,minimum size=6pt] at (2.3,5.4) {};
\node[above,completeblue] at (2.3,5.65) {$x$};
\draw[->,rpred,thin] (2.65,5.05) -- (2.65,3.35);
\node[rpred] at (3.35,3.55) {\scriptsize $(x,r)$};
\draw[dotted,black!45] (4.8,0.3) -- (4.8,3.95);
\foreach \h in {0.5,1.05,1.95,2.6,3.7}
\node[circle,draw=black!55,fill=white,inner sep=0pt,minimum size=3.6pt] at (4.8,\h) {};
\node[rectangle,fill=rpred,draw=rpred,inner sep=0pt,minimum size=5pt] at (4.8,3.2) {};
\node[circle,fill=completeblue,draw=completeblue,inner sep=0pt,minimum size=6pt] at (4.8,4.3) {};
\node[above,completeblue] at (4.8,4.55) {$z$};
\node[rpred] at (5.6,3.55) {\scriptsize $(z,r)$};
\node[rpred,fill=white,inner sep=1.5pt] at (3.55,3.2) {\scriptsize $\bowtie$};
\draw[dotted,black!45] (7.6,0.3) -- (7.6,1.25);
\foreach \h in {0.5,1.05}
\node[circle,draw=black!55,fill=white,inner sep=0pt,minimum size=3.6pt] at (7.6,\h) {};
\node[circle,fill=completeblue,draw=completeblue,inner sep=0pt,minimum size=6pt] at (7.6,1.6) {};
\node[above,completeblue] at (7.6,1.85) {$y$};
\node[black!55,fill=white,inner sep=1.5pt] at (7.6,3.2) {\scriptsize $\times$};
\node[black!55] at (7.6,4.0) {\tiny no shadow of $y$};
\end{tikzpicture}
\caption{The shadow completion, for $x \succ y$ and a rational level $r$ with $u(y) < r < u(x)$. Filled circles are originals; open circles are shadows, one $(x,q)$ at level $u^*(x,q)=q$ for every rational $q < u(x)$. Squares mark the shadows at level $r$, which together form the stratum $A_r$; distinct originals' shadows at a common level are $\succsim^*$-incomparable ($\bowtie$), so $A_r$ contains no strictly ranked pair. Moreover, $A_r$ separates $x \succ y$ because $x \succsim^* (x,r) \in A_r$ and, for any $(w,r) \in A_r$, we must have $u(w) > r > u(y)$, thus $y \not\succsim^* (w,r)$.}
\label{fig:shadow}
\end{figure}

Theorem~\ref{theorem:impossibility} shows that the embedding cannot be dispensed with in general. It can be dispensed with often, and the reason is visible in the construction: what the shadow completion supplies is interposed alternatives at prescribed utility levels, and many domains of economic interest contain them already. The starting point is that a Richter--Peleg representation always partitions its domain into strata---its level sets, which satisfy Definition~\ref{def:stratification} for the trivial reason that $u$ is strictly increasing along $\succ$, and which are exactly the indifference classes when $\succsim$ is complete. There are in general uncountably many of them. The next result gives a condition under which countably many already suffice.

\begin{proposition}
\label{proposition:no_embedding}
Let $X$ be a topological space and let $\succsim$ be a reflexive binary relation on $X$ admitting a continuous Richter--Peleg representation $u$. If the lower contour set $L(x) := \left\{z \in X \, \middle| \, x \succsim z\right\}$ is connected for every $x \in X$, then
\[
\left\{u^{-1}(q) \, \middle| \, q \in \mathbb{Q}\right\} \setminus \left\{\emptyset\right\}
\]
is a countable separating stratification of $\succsim$.
\end{proposition}

When $\succsim$ is a preorder, Proposition~\ref{proposition:no_embedding} implies that it is separable with no enlargement of its domain. The hypotheses are sufficient and far from necessary---a countable domain requires none of them---but they are met by the many incomplete preferences used in economics.\footnote{It is essential that the hypothesis concern the lower contour sets rather than $X$ itself. Connectedness of $X$ does not suffice: under the continuum hypothesis, the relation constructed in the proof of Theorem~\ref{theorem:impossibility} can be realized on $X = [0,1]$ with its usual topology and with $\varphi$ the identity, which is continuous, so that $X$ is a compact interval and yet every separating stratification of $\succsim$ is uncountable. A path from $y$ to $x$ may leave $L(x)$, and the intermediate values of $u$ are then attained at alternatives that $x$ does not dominate.}

\begin{example}
\label{example:vector_order}
The natural vector order on a convex $X \subseteq \mathbb{R}^n$ has convex, hence connected, lower contour sets, and $u(x) = \sum_i x_i$ is a continuous Richter--Peleg representation of it. Proposition~\ref{proposition:no_embedding} therefore applies, and no embedding is required.
\end{example}

\begin{example}
\label{example:mixture}
A \textit{DMO preference} is a preorder on the set $P(Z)$ of Borel probability measures over a compact metric prize space $Z$ that satisfies independence\footnote{A binary relation $\succsim$ on $P(Z)$ satisfies \textit{independence} if, for all $p,q,r \in P(Z)$ and $\alpha \in [0,1]$, $p \succsim q$ is equivalent to $\alpha p +(1-\alpha) r \succsim \alpha q +(1-\alpha) r$.} and is closed as a subset of $P(Z) \times P(Z)$, where $P(Z)$ carries the topology of weak convergence \citep{Dubra2004}. Independence makes every lower contour set convex, hence connected: if $p \succsim q_1$ and $p \succsim q_2$, then $p \succsim \lambda p + (1-\lambda) q_2 \succsim \lambda q_1 + (1-\lambda) q_2$ for every $\lambda \in [0,1]$. And \citet[Proposition~3]{Dubra2004} show that every DMO preference admits a continuous \textit{Aumann utility}: a function $v \in C(Z)$ whose expectation $p \mapsto \int_Z v \, dp$ is a Richter--Peleg representation of $\succsim$, and is automatically continuous for the topology of weak convergence. Proposition~\ref{proposition:no_embedding} therefore applies here too, and no embedding is required. First- and second-order stochastic dominance on the lotteries over a compact interval are instances.
\end{example}

\subsection{Stratification numbers}
\label{sec:stratification_numbers}

Theorem~\ref{theorem:main}, Theorem~\ref{theorem:impossibility} and Proposition~\ref{proposition:order_density} are statements about a single cardinal, and naming it clarifies how they fit together.

\begin{definition}
\label{def:stratification_number}
The \textit{stratification number} of a binary relation $\succsim$ on $X$ is
\[
\mathfrak{s}(\succsim) := \min\left\{|\mathcal{A}| \, \middle| \, \mathcal{A} \text{ is a separating stratification of } \succsim\right\},
\]
and its \textit{embedded stratification number} is
\[
\mathfrak{s}^*(\succsim) := \min\left\{\mathfrak{s}(\succsim^*) \, \middle| \, \succsim^t \text{ embeds in the preorder } \succsim^*\right\}.
\]
\end{definition}

Both minima are attained: Example~\ref{example:separating_singletons} shows that the first set of cardinals is nonempty, $\succsim^t$ embeds in itself so the second is nonempty as well, and any nonempty class of cardinals has a least element. When $\succsim$ is a preorder, $\succsim^t \, = \, \succsim$ and the identity embedding gives $\mathfrak{s}^*(\succsim) \leq \mathfrak{s}(\succsim)$.

Separability of $\succsim^t$ is the statement $\mathfrak{s}^*(\succsim) \leq \aleph_0$, so Theorem~\ref{theorem:main} reads: $\succsim$ admits a Richter--Peleg representation if and only if it is strongly acyclic and $\mathfrak{s}^*(\succsim) \leq \aleph_0$. Proposition~\ref{proposition:order_density} says that $\mathfrak{s}(\succsim)$ is the least cardinality of a $\succsim$-dense subset when $\succsim$ is a complete preorder, so that for complete preorders the stratification number \emph{is} Debreu's density number. Proposition~\ref{proposition:no_embedding} gives conditions under which $\mathfrak{s}(\succsim) \leq \aleph_0$, and not merely $\mathfrak{s}^*(\succsim) \leq \aleph_0$. And Theorem~\ref{theorem:impossibility} exhibits a partial order $\succsim$ for which the two invariants come apart:
\[
\mathfrak{s}^*(\succsim) \; \leq \; \aleph_0 \; < \; \aleph_1 \; \leq \; \mathfrak{s}(\succsim).
\]

That gap is possible only because the relation is incomplete.

\begin{proposition}
\label{proposition:complete_no_gap}
Let $\succsim$ be a complete preorder with $\mathfrak{s}(\succsim)$ infinite. Then $\mathfrak{s}(\succsim) = \mathfrak{s}^*(\succsim)$.
\end{proposition}

Enlarging the domain therefore never lowers the stratification number of a complete preorder, at any infinite cardinal. The embedding clause in Definition~\ref{def:separability} is therefore not a device that a sharper argument might remove, and it binds exactly when completeness fails. The mechanism is that the increasing subsets of a complete preorder are nested---if $b \in C \setminus C'$ and $b' \in C' \setminus C$, completeness places one of them in the other's set---so any separating family of increasing sets is a chain, and a chain of a given cardinality cannot resolve more strict comparisons than a $\succsim$-dense subset of the same cardinality. Incompleteness breaks the nesting, and with it the correspondence.

\subsection{Relation to existing characterizations for preorders}
\label{sec:related_characterizations}

For preorders, the class of binary relations admitting a Richter--Peleg representation has been characterized in the prior literature through countable separating families of \textit{increasing sets}, where a set $B \subseteq X$ is increasing if $x \in B$ and $y \succsim x$ imply $y \in B$. Specifically, a preorder $\succsim$ on $X$ admits a Richter--Peleg representation if and only if there exists a countable family $\{B_n\}_{n \in \mathbb{N}}$ of increasing subsets of $X$ such that, for every pair $x \succ y$, some $B_n$ contains $x$ but not $y$. This characterization is implicit in \citet{Herden1989a} (Proposition 3.1, formulated in terms of countable families of pairs of disjoint decreasing-increasing subsets, equivalent to the increasing-set formulation via complementation) and made explicit in \citet{HackBraunGottwald2022} (Proposition 7).

Our Corollary~\ref{corollary:preorder} below recovers this characterization through Theorem~\ref{theorem:main}: the two formulations are logically equivalent for preorders. They differ in what they ask of the family. The sets $B_n$ are unconstrained relative to one another, so the condition can always be checked inside $X$; the strata of Definition~\ref{def:stratification}, by contrast, must be pairwise disjoint and rank-coherent, and Theorem~\ref{theorem:impossibility} shows that this extra structure cannot in general be realized in $X$ itself. The embedding clause is the price of it. What the price buys is Proposition~\ref{proposition:order_density}: it is precisely disjointness and rank-coherence that make a stratum behave like an indifference class, and hence make a separating stratification the right generalization of Debreu's $\succsim$-dense subset. The increasing-sets formulation, which drops that structure, connects instead to multi-utility representations, where each $B_n$ is an upper-level set $\{x : u(x) > q\}$ of a representing function.

\subsection{Corollaries}
\label{sec:corollaries}

Theorem~\ref{theorem:main} leads to two corollaries that offer simpler characterizations of Richter--Peleg representability in specific contexts.

\begin{corollary}
\label{corollary:preorder}
Let $X$ be a nonempty set and let $\succsim$ be a preorder on $X$. Then $\succsim$ admits a Richter--Peleg representation if and only if $\succsim$ is separable.
\end{corollary}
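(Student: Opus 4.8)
The plan is to obtain Corollary~\ref{corollary:preorder} directly from Theorem~\ref{theorem:main}, exploiting the fact that a preorder is its own transitive closure and that a preorder is automatically strongly acyclic. More precisely, suppose first that $\succsim$ admits a Richter-Peleg representation. By Theorem~\ref{theorem:main}, the transitive closure of $\succsim$ is separable; but since $\succsim$ is transitive, it equals its own transitive closure, so $\succsim$ itself is separable. Conversely, suppose $\succsim$ is separable. I would first check that a preorder is strongly acyclic: if $x_1 \succsim x_2 \succsim \cdots \succsim x_n \succsim x_1$, then transitivity gives $x_i \succsim x_j$ for all $i,j$ in the cycle (going around as needed), and in particular $x_i \succsim x_{i+1}$ and $x_{i+1} \succsim x_i$, so $x_i \sim x_{i+1}$ for each $i$; hence $\succsim$ is strongly acyclic. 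Since $\succsim$ is also transitive, it coincides with its transitive closure, which is separable by hypothesis. Both conditions of Theorem~\ref{theorem:main} are therefore satisfied, and $\succsim$ admits a Richter-Peleg representation.

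The only genuinely substantive point is the claim that the transitive closure of a transitive relation is the relation itself. This is essentially a definitional triviality: the transitive closure is the smallest transitive relation containing $\succsim$, and $\succsim$ is transitive and contains itself, so the transitive closure is $\succsim$. (Alternatively, if the transitive closure is defined as the set of pairs $(x,y)$ linked by a finite $\succsim$-chain $x = z_0 \succsim z_1 \succsim \cdots \succsim z_k = y$, then transitivity of $\succsim$ collapses any such chain to $x \succsim y$, and the reverse inclusion is immediate since chains of length one are allowed.)

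I do not expect any real obstacle here; the proof is a short deduction once one records the two elementary observations above. The main thing to be careful about is making sure the notion of ``transitive closure'' used in the statement of Theorem~\ref{theorem:main} is the standard one and that it indeed reduces to the identity operation on transitive relations, which it does under any of the usual definitions. One could also note, as a sanity check matching the discussion in Section~\ref{sec:separability}, that Lemma~\ref{lemma:transitivity} shows separability already entails transitivity, so the ``separable $\Rightarrow$ preorder'' half of the equivalence is consistent with restricting attention to preorders from the outset.
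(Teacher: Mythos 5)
Your proposal is correct and is exactly the intended derivation: the paper gives no separate proof of this corollary, treating it as an immediate consequence of Theorem~\ref{theorem:main} via the two observations you record (a preorder is strongly acyclic and coincides with its own transitive closure). Nothing is missing.
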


Corollary~\ref{corollary:preorder} is the stratification-based form of the characterizations discussed in Section~\ref{sec:related_characterizations}.

\begin{corollary}
\label{corollary:countable}
Let $X$ be a nonempty countable set and let $\succsim$ be a binary relation on $X$. Then $\succsim$ admits a Richter--Peleg representation if and only if $\succsim$ is strongly acyclic.
\end{corollary}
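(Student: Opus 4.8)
The ``only if'' direction requires nothing new: by Lemma~\ref{lemma:strong_acyclicity}, any binary relation admitting a Richter-Peleg representation is strongly acyclic, and the cardinality of $X$ plays no role. The plan therefore concentrates on the converse.

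For the converse, suppose $X$ is countable and $\succsim$ is strongly acyclic, and let $\succsim^{*}$ denote the transitive closure of $\succsim$. I would obtain the representation from Theorem~\ref{theorem:main}: since strong acyclicity is already assumed, it suffices to verify that $\succsim^{*}$ is separable. This is exactly where countability enters. The collection $\mathcal{A}=\{\{x\}:x\in X\}$ of all singletons is countable because $X$ is, and by Example~\ref{example:separating} it is a separating stratification of $\succsim^{*}$. Since $\succsim^{*}$ is a preorder, it is (trivially) embedded in itself via the identity map while carrying the countable separating stratification $\mathcal{A}$; hence $\succsim^{*}$ is separable in the sense of the definition of separability. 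Theorem~\ref{theorem:main}, applied to $\succsim$ (strongly acyclic with separable transitive closure), then delivers a Richter-Peleg representation.

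There is essentially no obstacle here beyond Theorem~\ref{theorem:main} itself, which carries all the real work; this corollary merely records that countability of $X$ forces a countable separating stratification (the singletons) of the transitive closure. The only point demanding routine care is recognizing $\succsim^{*}$ as a preorder so that the definition of separability applies. For completeness I would also sketch a self-contained route that bypasses Theorem~\ref{theorem:main}: strong acyclicity implies that every Richter-Peleg representation of the preorder $\succsim^{*}$ is automatically one of $\succsim$ (an indifference in $\succsim$ is an indifference in $\succsim^{*}$, and a strict preference in $\succsim$ cannot be reversed in $\succsim^{*}$ without producing a cycle excluded by strong acyclicity), while a countable preorder visibly admits such a representation by enumerating $X$ and choosing the utility value of each new element inside the nonempty interval left open by the finitely many previously assigned values — strong acyclicity being precisely what keeps these finite constraint systems consistent.
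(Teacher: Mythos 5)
Your proposal is correct and matches the paper's intended argument: the paper offers no separate proof of this corollary precisely because, as you observe, countability of $X$ makes the singleton collection a countable separating stratification of the transitive closure (Example~\ref{example:separating}), so separability is automatic and Theorem~\ref{theorem:main} does all the work. The only point worth tightening is your claim that $\succsim^{*}$ is a preorder, which requires first passing to the reflexive closure (harmless by Lemma~\ref{lemma:reflexivity}, and exactly the normalization the paper performs at the start of the proof of Theorem~\ref{theorem:main}), since the transitive closure of a non-reflexive relation need not be reflexive and, by Lemma~\ref{lemma:transitivity}, could not then be embedded in any preorder.
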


Strong acyclicity, which rules out weak-preference cycles containing a strict link, is a relatively intuitive and easily verifiable condition compared to separability. Corollary~\ref{corollary:countable} shows that when $X$ is countable, this condition alone is both necessary and sufficient for Richter--Peleg representability. This is close to classical territory: by Proposition~\ref{proposition:equivalence}, strong acyclicity is Suzumura consistency, which \citet[Theorem~3]{Suzumura1976} shows to be equivalent to the existence of a complete preorder extending $\succsim$, and every complete preorder on a countable set is representable.

\section{Applications}
\label{sec:application}

\subsection{Optimality and utility maximization}
\label{sec:optimization}

Identifying the class of preorders with a Richter--Peleg representation is important because optimal behavior according to such preorders can be characterized by scalar optimization.

In the theory of choice under risk, it has long been known that every lottery maximal for a DMO preference over a polytope\footnote{\textit{DMO} stands for Dubra--Maccheroni--Ok; these preferences and Aumann utilities were introduced in Example~\ref{example:mixture}. A \textit{polytope} is the convex hull of a finite set of points; \citet{Aumann1962} uses the term ``convex polyhedron''.} maximizes the expectation of some Aumann utility for the preference. This is the ``only if'' direction of \citet[Theorem~B]{Aumann1962}; that direction is false as originally stated and was re-established by \citet{Aumann1964} under a strengthened Archimedean axiom, which the continuity of a DMO preference supplies. For the generality needed here, see \citet{Evren2014} and \citet{Gorno2017}.

For a binary relation $\succsim$ on $X$ and $A \subseteq X$, write
\[
\mathcal{M}\left(\succsim, A\right) := \left\{x \in A \, \middle| \, \forall y \in A: y \succsim x \implies x \succsim y \right\}
\]
for the set of $\succsim$-maximal elements of $A$. \citet{White1980} showed that similar results hold for general preorders whenever a Richter--Peleg representation exists. His Theorem~1 states that, if $\succsim$ is a preorder that admits a Richter--Peleg representation, then
\[
\mathcal{M}\left(\succsim, X\right) = \bigcup_{u \in \mathcal{U}} \argmax_{x \in X} u(x),
\]
where $\mathcal{U}$ denotes the set of all Richter--Peleg representations of $\succsim$.\footnote{White's notation is the reverse of ours: his $M(X,V)$ is the union of the $\argmax$ sets and his $\mathcal{C}(X,R)$ is our $\mathcal{M}(\succsim,X)$. He works with the bounded members of $\mathcal{U}$, which is immaterial here, since composing a Richter--Peleg representation with a strictly increasing bounded transformation preserves both the property and the set of maximizers.} His Theorem~2 sharpens this to a single representation, but only for complete preorders. Proposition~\ref{proposition:optimality1} below extends both statements: it holds for an arbitrary subset $A \subseteq X$, it requires no completeness, and it delivers a single Richter--Peleg representation. The price is that the representation depends on $A$, and Example~\ref{example:A_dependence} shows that this cannot be avoided---which is precisely why completeness was needed in White's Theorem~2.

\citet{White1980} does not establish that $\mathcal{U} \neq \emptyset$; he assumes it, extending $\succsim$ to a complete preorder by Szpilrajn's theorem and then invoking a countable denseness hypothesis in order to represent the extension. Theorem~\ref{theorem:main} characterizes exactly when that hypothesis can be met.

\begin{proposition}
\label{proposition:optimality1}
Suppose the preorder $\succsim$ admits a Richter--Peleg representation. Then, for every $A \subseteq X$, there exists a Richter--Peleg representation of $\succsim$, $u:X\to \mathbb{R}$, such that
\[
\mathcal{M}\left(\succsim, A\right) = \argmax_{x \in A} u(x).
\]
\end{proposition}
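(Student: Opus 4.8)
The plan is to take any Richter-Peleg representation $v$ of $\succsim$ (one exists by hypothesis), normalize it to take values in $(0,1)$ by composing with a strictly increasing bijection $\mathbb{R}\to(0,1)$ — which preserves the Richter-Peleg property — and then raise its values on $\mathcal{M}(\succsim,A)$ and on every alternative that strictly dominates some element of $\mathcal{M}(\succsim,A)$, so that $\mathcal{M}(\succsim,A)$ becomes exactly $\argmax_{x\in A}u(x)$.

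Three preliminary observations organize the argument. First, for any Richter-Peleg representation $w$ of any binary relation, $\argmax_{x\in A}w(x)\subseteq\mathcal{M}(\succsim,A)$, since a $w$-maximizer on $A$ cannot be strictly dominated within $A$; in particular, if $\mathcal{M}(\succsim,A)=\emptyset$ the statement holds with $u=v$, so from now on assume $\mathcal{M}:=\mathcal{M}(\succsim,A)\neq\emptyset$. Second, no two elements of $\mathcal{M}$ are strictly ranked: if $m\succ m'$ with $m,m'\in\mathcal{M}$, taking $y=m$ in the definition of $m'\in\mathcal{M}$ forces $m'\succsim m$, a contradiction. Third, using reflexivity and transitivity of $\succsim$: any $x\in A$ with $x\sim m$ for some $m\in\mathcal{M}$ lies in $\mathcal{M}$, and any $x\in X$ with $x\succ m$ for some $m\in\mathcal{M}$ lies outside $A$.

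Next I would partition $X$ into three blocks according to their relation to $\mathcal{M}$: $E:=\{x\in X: x\succ m\text{ for some }m\in\mathcal{M}\}$, $D_0:=\{x\in X: x\sim m\text{ for some }m\in\mathcal{M}\}$, and $R:=X\setminus(E\cup D_0)$. Transitivity gives that $E$ and $E\cup D_0$ are upward closed for $\succsim$, that $E$ and $D_0$ are disjoint with $\mathcal{M}\subseteq D_0$, that $D_0$ contains no strictly ranked pair, that no indifference holds between alternatives in distinct blocks, and that a strict preference $x\succ y$ between alternatives in distinct blocks requires the block of $x$ to lie strictly above the block of $y$ in the chain $R\prec D_0\prec E$; moreover, by the third observation, $A\cap E=\emptyset$ and $A\cap D_0=\mathcal{M}$. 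Then I would define $u\colon X\to\mathbb{R}$ by $u=v$ on $R$, $u=2$ on $D_0$, and $u=v+3$ on $E$. The block facts make it routine to verify that $u$ is a Richter-Peleg representation: $v$ respects indifference and $\sim$-related alternatives share a block, so $u$ respects indifference and condition (1) holds; and every strict preference either stays within a block (respected by the monotonicity of $v$, and vacuous in $D_0$) or crosses blocks in the only admissible direction, which is compatible with the value ranges $(0,1)$, $\{2\}$, $(3,4)$ used on $R$, $D_0$, $E$, so condition (2) holds. Finally, on $A$ we have $u=2$ on $\mathcal{M}$ and $u\in(0,1)$ on $A\setminus\mathcal{M}$; since $\mathcal{M}\neq\emptyset$, the supremum of $u$ over $A$ equals $2$ and is attained exactly on $\mathcal{M}$, so $\argmax_{x\in A}u(x)=\mathcal{M}(\succsim,A)$.

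I expect the main obstacle to be the bookkeeping behind the ``block facts'' — in particular that $D_0$ carries no strict rankings and that cross-block strict preferences run only the harmless way — each of which reduces to the second observation together with transitivity but must be assembled with care; one also must not overlook the degenerate case $\mathcal{M}=\emptyset$ nor the initial normalization that bounds $v$. A cleaner packaging passes to the quotient preorder $X/{\sim}$, which is a partial order because $\succsim$ is a preorder: there $D_0$ becomes the antichain of maximal elements of the image of $A$, $E\cup D_0$ becomes the up-set it generates, and the verifications reduce to elementary statements about up-sets and antichains in a poset carrying a real-valued function that is strictly increasing with respect to the order. I would likely write the final proof in that form.
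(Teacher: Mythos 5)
Your proof is correct and follows essentially the same strategy as the paper's: normalize $v$ to take values in $(0,1)$, dispose of the degenerate empty case, and then piecewise-shift $v$ over a three-block partition of $X$ so that the $\sim$-closure of $\mathcal{M}(\succsim,A)$ receives a constant value strictly separating the value ranges of the other two blocks. The only difference is cosmetic: the paper's ``low'' block consists of the alternatives strictly dominated by some element of $A$ rather than your $R$, but the two partitions agree on $A$ itself and both verifications proceed identically.
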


The representation in Proposition~\ref{proposition:optimality1} depends on $A$, and this dependence is unavoidable.

\begin{example} \label{example:A_dependence} Let $X=\{a,b,c\}$ and $\succsim \hspace{2pt} := \left\{(a,a),(b,b),(c,c),(a,c)\right\}$, so that $a \succ c$ and $b$ is incomparable to both $a$ and $c$. The relation $\succsim$ is a partial order and admits the Richter--Peleg representation $v(a) = 1$, $v(b) = v(c) = 0$. Suppose some single $u$ satisfied $\mathcal{M}\left(\succsim, A\right) = \argmax_{x \in A} u(x)$ for every $A \subseteq X$. Taking $A = \{a,b\}$, whose two elements are both maximal, gives $u(a) = u(b)$; taking $A = \{b,c\}$ gives $u(b) = u(c)$; hence $u(a) = u(c)$, contradicting $a \succ c$.
\end{example}

A second result shows that every alternative maximizes some Richter--Peleg representation over every set in which the alternative is optimal.

\begin{proposition}
\label{proposition:optimality2}
Suppose the preorder $\succsim$ admits a Richter--Peleg representation. Then, for every $x \in X$, there exists a Richter--Peleg representation of $\succsim$, $u : X \to \mathbb{R}$, such that
\[
x \in \argmax_{y \in A} u(y)
\]
for every $A \subseteq X$ satisfying $x \in \mathcal{M}\left(\succsim, A\right)$.
\end{proposition}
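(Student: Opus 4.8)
The plan is to fix $x \in X$ and explicitly perturb a given Richter--Peleg representation so that $x$ becomes a maximizer over the largest set in which it is maximal. I would first record the elementary observation that whenever $x \in \mathcal{M}(\succsim, A)$, every $y \in A$ satisfies $y \not\succ x$; hence $A \subseteq B_x := \{z \in X : z \not\succ x\}$, and of course $x \in A$. Consequently it suffices to produce a single Richter--Peleg representation $u$ of $\succsim$ with $u(x) \geq u(y)$ for all $y \in B_x$: this immediately gives $x \in \argmax_{y\in A} u(y)$ for every admissible $A$ simultaneously.

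To construct $u$, start from any Richter--Peleg representation of $\succsim$ (one exists by hypothesis) and post-compose it with a strictly increasing bijection $\mathbb{R} \to (0,1)$, say $t \mapsto \tfrac{1}{2} + \tfrac{1}{\pi}\arctan t$, obtaining $v : X \to (0,1)$, which is again a Richter--Peleg representation. Then set $u(y) := v(y)$ if $y \succsim x$ and $u(y) := v(y) - 1$ otherwise. The one structural fact that makes this work is that the weak upper contour set $U := \{y \in X : y \succsim x\}$ is closed under moving up, i.e.\ $z \succsim y$ and $y \in U$ imply $z \in U$ (by transitivity, $z \succsim y \succsim x$); in particular $U$ is a union of $\sim$-classes, and there is no pair with $y \succ z$, $y \notin U$, and $z \in U$.

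Given this, I would check that $u$ is a Richter--Peleg representation of $\succsim$. Indifference is preserved because $v$ preserves it and the indicator $\mathbf{1}_{X\setminus U}$ is constant on each $\sim$-class. For a strict preference $y \succ z$: if $\{y,z\} \subseteq U$ or $\{y,z\} \cap U = \emptyset$, the inequality $u(y) > u(z)$ reduces to the one for $v$; if $y \in U$ and $z \notin U$, then $u(y) = v(y) > 0 > v(z) - 1 = u(z)$; and the case $y \notin U$, $z \in U$ cannot occur by the structural fact. Finally, for the maximality claim, $u(x) = v(x)$ by reflexivity, and any $y \in B_x$ falls into one of two cases: either $y \succsim x$, in which case $y \not\succ x$ forces $x \sim y$ and hence $u(y) = v(y) = v(x) = u(x)$; or $y \not\succsim x$, in which case $u(y) = v(y) - 1 < 0 < v(x) = u(x)$. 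Thus $u(x) \geq u(y)$ for every $y \in B_x$, which together with the reduction in the first paragraph proves the proposition.

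I do not anticipate a genuine obstacle: the only real content is guessing the right perturbation---rescale $v$ into a bounded interval and then push the complement of the weak upper contour set of $x$ down by a fixed amount---and observing that transitivity of $\succsim$ is precisely what eliminates the single case that would otherwise break the Richter--Peleg property. The only technical care needed is that the downward shift must exceed the total range of $v$, which is why $v$ is first rescaled to take values in $(0,1)$ rather than working with an arbitrary (possibly unbounded) representation.
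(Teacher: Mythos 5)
Your proof is correct and follows essentially the same strategy as the paper's: rescale a Richter--Peleg representation into $(0,1)$ and then shift values by a constant according to each alternative's relation to the fixed $x$, using transitivity to rule out the one case that could violate the strict-preference requirement. The only cosmetic difference is that you shift the complement of the weak upper contour set down, while the paper shifts the set $\{y : y \succsim x\}$ up (splitting it into $y \sim x$ and $y \succ x$); the verification is the same in both cases.
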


Together, Propositions~\ref{proposition:optimality1} and \ref{proposition:optimality2} show that the maximization of preorders admitting a Richter--Peleg representation can always be reduced to scalar optimization. This result does not extend to strongly acyclic binary relations that fail transitivity:

\begin{example}
Let $X=\{a,b,c\}$ and $\succsim \hspace{2pt} := \left\{(a,a),(b,b),(c,c),(a,b),(b,c),(c,b)\right\}$. The relation $\succsim$ is not transitive but admits the Richter--Peleg representation $u: X \to \mathbb{R}$ given by $u(a)=1$ and $u(b)=u(c)=0$. By Lemma~\ref{lemma:strong_acyclicity}, $\succsim$ is strongly acyclic. The set of maximal elements is $\mathcal{M}\left(\succsim,\{a,c\}\right) = \{a,c\}$. Nevertheless, $c \notin \argmax_{x \in \{a,c\}} v(x)$ for any Richter--Peleg representation $v: X \to \mathbb{R}$ of $\succsim$, so the conclusions of Propositions~\ref{proposition:optimality1} and \ref{proposition:optimality2} fail.
\end{example}

\subsection{An application to intergenerational equity}
\label{sec:equity}

Richter--Peleg representability also organizes a known impossibility result in welfare economics. Let $X := [0,1]^{\mathbb{N}}$ be the set of infinite utility streams, let $\geq$ denote the componentwise order, and let $\Pi$ be the set of permutations of $\mathbb{N}$ moving only finitely many coordinates. The \textit{Suppes--Sen grading preorder} is defined by $x \succsim_{S} y$ if and only if $\pi x \geq y$ for some $\pi \in \Pi$, where $(\pi x)_{n} := x_{\pi(n)}$; it is the canonical formalization of a social ranking that respects both efficiency and the equal treatment of generations. A \textit{social welfare function} $W : X \to \mathbb{R}$ satisfies \textit{strong Pareto} if $x \geq y$ and $x \neq y$ imply $W(x) > W(y)$ for all $x,y \in X$, and \textit{finite anonymity} if $W(\pi x) = W(x)$ for all $\pi \in \Pi$ and $x \in X$.

\begin{proposition}
\label{proposition:suppes_sen}
A social welfare function satisfies strong Pareto and finite anonymity if and only if it is a Richter--Peleg representation of $\succsim_{S}$.
\end{proposition}

Since $\succsim_{S}$ is a preorder, it is strongly acyclic and equal to its own transitive closure, so Theorem~\ref{theorem:main} reduces to separability alone. The impossibility theorem of \citet{BasuMitra2003}---no social welfare function on $X$ satisfies strong Pareto and finite anonymity---therefore says exactly that $\succsim_{S}$ is not separable.\footnote{We use the theorem only in the form stated here, for one-period utilities in $[0,1]$; see \citet{BasuMitra2003} for the general statement.} Proposition~\ref{proposition:order_density}(i) then yields a concrete diagnosis: no countable set of utility streams can be interposed in every strict Suppes--Sen comparison, so the obstruction is Debreu's classical one, arising here in a relation that is very far from complete.\footnote{That equitable social welfare relations admit no Richter--Peleg representation is established directly by \citet{BanerjeeDubey2010}, who also rule out countable multi-utility representations. What Theorem~\ref{theorem:main} adds is the identification of the property whose failure is responsible, namely separability.}

The failure is one of representation rather than of ranking. \citet{Svensson1980} shows, via the Szpilrajn extension theorem, that Paretian and anonymous social welfare \textit{orderings} exist; Proposition~\ref{proposition:equivalence} and \citet[Theorem~3]{Suzumura1976} say the same thing for any preorder, since every preorder is strongly acyclic and hence extends to a complete preorder. What the welfare literature's impossibility adds is that no such extension is representable, and Theorem~\ref{theorem:main} locates the reason in the failure of separability in the sense of Definition~\ref{def:separability}. Nor is the failure inherent to the equity requirement: on a finite horizon, total utility $x \mapsto \sum_{i \leq n} x_{i}$ is anonymous and strictly increasing in the componentwise order, hence a Richter--Peleg representation of the finite-horizon Suppes--Sen preorder.

\section{Discussion}
\label{sec:discussion}

The conditions of Theorem~\ref{theorem:main} bear differently on different classes of binary relations of interest in economics. For preorders, separability is the only binding condition, and Corollary~\ref{corollary:preorder} is directly applicable. Many incomplete preferences arising in economic theory are preorders: Pareto dominance on consumption bundles, stochastic dominance orderings on lotteries, and the canonical preorders associated with multi-criterion or expected multi-utility models \citep{Ok2002,Dubra2004,EvrenOk2011}. Necessary and sufficient conditions for Richter--Peleg representability in these settings are also available through the increasing-sets formulation of \citet{Herden1989a} and \citet{HackBraunGottwald2022}; the distinguishing feature of our characterization is that it specializes, under completeness, to Debreu's classical density condition (Proposition~\ref{proposition:order_density}), making the parallel with the classical representation theorem explicit.

For binary relations that fail to be transitive, the reach of Theorem~\ref{theorem:main} is narrower. Strong acyclicity rules out a broad class of behaviorally motivated non-transitive preferences, including complete semiorders with intransitive indifference \citep{Luce1956}\footnote{Luce's axioms impose total comparability (Axiom~S1); later work relaxes this, studying semiorders as a special class of possibly incomplete \textit{interval orders} \citep{Fishburn1970}. We retain completeness here because it is essential to the mechanism: incomparability, unlike a resolved strict comparison, need not close the weak-preference cycle that strong acyclicity forbids.} and similar models in which a strict preference between two alternatives may coexist with a chain of pairwise indifferences linking them.\footnote{In a semiorder on $\mathbb{R}$ with a constant discrimination threshold $\delta > 0$---the special case in which the upper and lower just-noticeable-difference functions of \citet[Corollary to Theorem~2]{Luce1956} coincide and do not vary with the point of evaluation---one typically has chains $x_1 \sim x_2 \sim \ldots \sim x_n$ with $x_n \succ x_1$, which violates strong acyclicity. This is in fact a general feature of complete semiorders, not an artifact of the threshold representation: whenever $x \sim y \sim z$ but $x \succ z$ (the case $z \succ x$ is symmetric), the three-term cycle $x \succsim z \succsim y \succsim x$ already violates strong acyclicity, since it does not collapse to $x \sim z \sim y \sim x$.} Such preferences therefore do not admit a Richter--Peleg representation (Lemma~\ref{lemma:strong_acyclicity}). The affirmative content of Theorem~\ref{theorem:main} for non-transitive relations lies instead in settings where the analyst observes only a subset of the true preference comparisons. The transitive closure of the observed relation captures the implications of the recorded comparisons, while strong acyclicity ensures these implications are consistent with the underlying strict preferences---a perspective with antecedents in the revealed-preference tradition initiated by \citet{Richter1966} and developed by \citet{Suzumura1976}, where conditions ruling out cycles of weak and strict comparison play a central role.

Figure~\ref{fig:venn} summarizes how strong acyclicity and separability, jointly equivalent to Richter--Peleg representability by Theorem~\ref{theorem:main}, relate to completeness and transitivity. Completeness and Richter--Peleg representability jointly imply transitivity,\footnote{If $u$ is a Richter--Peleg representation and $x \succsim y \succsim z$, isotonicity gives $u(x) \geq u(z)$; were $x \not\succsim z$, completeness would give $z \succ x$ and hence $u(z) > u(x)$.} so of the eight configurations of the three properties in Figure~\ref{fig:venn}, only region~(ii) is empty; the remaining seven are populated by examples already given in this paper. Region~(i), where all three properties hold, is the classical case of a complete, transitive relation admitting an ordinary utility representation. Region~(iii), completeness alone, is exemplified by the three-element cycle of Example~\ref{example:strict_cycle}, which fails both transitivity and strong acyclicity. Region~(iv), Richter--Peleg representability alone, is exemplified by Example~\ref{example:transitivity}. Region~(vi), completeness and transitivity without Richter--Peleg representability, is exemplified by the lexicographic order of Example~\ref{example:lex}. Region~(vii), Richter--Peleg representability and transitivity without completeness, is populated by any finite partial order that is not total, which is representable by Corollary~\ref{corollary:countable} since preorders are automatically strongly acyclic. Region~(v), transitivity alone, is exemplified by adjoining to the lexicographic order of Example~\ref{example:lex} a single further alternative incomparable to everything else: the enlarged relation remains transitive, since the new alternative bears no relation to the rest, yet fails completeness through the new incomparability and Richter--Peleg representability because any representation would restrict to one of Example~\ref{example:lex}. The same device, adjoining an incomparable alternative to Example~\ref{example:strict_cycle}, populates region~(viii), where none of the three properties holds.

\begin{figure}[htbp]
\centering
\begin{tikzpicture}[scale=0.95]
\begin{scope}[fill opacity=0.20, draw opacity=0.85]
\draw[completeblue, fill=completeblue, solid, very thick] (-1.8,0.75) ellipse (2.7 and 1.65);
\draw[rpred, fill=rpred, dashed, very thick] (1.8,0.75) ellipse (2.7 and 1.65);
\draw[transgreen, fill=transgreen, dotted, very thick] (0,-0.85) ellipse (2.4 and 1.9);
\end{scope}
\node[completeblue] at (-2.3,1.8) {\small Complete};
\node[rpred,align=center] at (2.2,1.6) {\small Richter--Peleg\\[-2pt]\small representable};
\node[transgreen] at (0,-2.3) {\small Transitive};
\node at (0,0.55) {(i)};
\node at (0,1.45) {(ii)};
\node at (-3.05,0.75) {(iii)};
\node at (3.15,0.65) {(iv)};
\node at (0,-1.6) {(v)};
\node at (-1.3,-0.45) {(vi)};
\node at (1.3,-0.45) {(vii)};
\node at (-3.85,-1.85) {(viii)};
\end{tikzpicture}
\caption{The relationship between completeness, transitivity, and Richter--Peleg representability. Ellipse boundaries are distinguished by line style (solid: complete; dashed: Richter--Peleg representable; dotted: transitive), and regions are numbered for reference in the text. Region~(ii), where a relation is complete and Richter--Peleg representable but not transitive, is empty; every other region is nonempty.}
\label{fig:venn}
\end{figure}

When $X$ carries a topology, a natural further question is whether the Richter--Peleg representation produced by Theorem~\ref{theorem:main} can be chosen continuous, or at least upper or lower semicontinuous. For complete preorders, this is the classical question addressed by \citet{Eilenberg1941} and \citet{Debreu1964}, with \citet{Peleg1970} treating partial orders and modern treatments in \citet{Herden1989a,Herden1989b}, \citet{BridgesMehta1995}, and \citet{BosiHerden2012}. For incomplete preorders, sufficient conditions for the existence of continuous Richter--Peleg representations have been developed by \citet{Bosi2010} and refined by \citet{Alcantud2016}, among others. A general topological refinement of Theorem~\ref{theorem:main}---identifying conditions on $(X, \succsim)$ under which the Richter--Peleg representation can be chosen with prescribed continuity properties---is beyond the scope of the present paper but constitutes a natural direction for future work.

\section*{Acknowledgments}
We thank Guilherme Araújo Lima for excellent research assistance. We are also grateful to participants at various seminars for their helpful comments. This study was financed in part by the Coordenação de Aperfeiçoamento de Pessoal de N\'{i}vel Superior - Brasil (CAPES) - Finance Code 001. The authors also gratefully acknowledge the financial assistance of the Brazilian National Council for Scientific and Technological Development (CNPq).

\appendix

\section{Proofs}
\label{appendix:proofs}

\begin{proof}[Proof of Lemma~\ref{lemma:strong_acyclicity}]
Let $u: X \to \mathbb{R}$ be a Richter--Peleg representation of $\succsim$. Take alternatives $x_1,x_2,\ldots,x_n \in X$ such that $x_1 \succsim x_2 \succsim \ldots \succsim x_n \succsim x_1$. Then, $u(x_1) \geq u(x_2) \geq \ldots \geq u(x_n) \geq u(x_1)$. It follows that $u(x_1) = u(x_2) = \ldots = u(x_n)$. Now suppose, seeking a contradiction, that $x_1 \sim x_2 \sim \ldots \sim x_n \sim x_1$ fails to hold. Hence, at least one of the indifferences does not hold. Because we can always shift the cycle, we can assume without loss of generality that $x_1 \not\sim x_2$. Hence, $x_1 \succ x_2$. Since $u$ is a Richter--Peleg representation of $\succsim$, we must have $u(x_1) > u(x_2)$, a contradiction.
\end{proof}

\begin{proof}[Proof of Proposition~\ref{proposition:equivalence}]
We prove the chain of implications $(1)\Rightarrow(4)\Rightarrow(2)\Rightarrow(3)\Rightarrow(1)$.

$(1)\Rightarrow(4)$. Suppose $\succsim$ is strongly acyclic and $x \succ y$. Then $x \succsim y$, so $x \succsim^t y$. Suppose, for contradiction, $y \succsim^t x$. Since neither $y=x$ nor $y \succsim x$ can hold, there must be a chain $y \succsim z_1 \succsim \ldots \succsim z_k \succsim x$, which combined with $x \succsim y$ yields $y \succsim z_1 \succsim \ldots \succsim z_k \succsim x \succsim y$. By strong acyclicity, $x \sim y$, contradicting $x \succ y$. Hence $y \not\succsim^t x$, so $x \succ^t y$.

$(4)\Rightarrow(2)$. Suppose $\succsim^t$ is an extension of $\succsim$, and suppose for contradiction that there is a chain $x_1 \succ x_2 \succsim x_3 \succsim \ldots \succsim x_n \succsim x_1$. By transitivity of $\succsim^t$, $x_2 \succsim^t x_1$. By assumption, $x_1 \succ x_2$ implies $x_1 \succ^t x_2$, so $x_2 \not\succsim^t x_1$, a contradiction.

$(2)\Rightarrow(3)$. Suppose $\succsim$ is Suzumura consistent, and suppose for contradiction that $(\succsim, \succ)$ is not pair-acyclic: there is a chain $x_1 \succsim x_2 \succsim \ldots \succsim x_n \succ x_1$. Cyclically shift to obtain $x_n \succ x_1 \succsim x_2 \succsim \ldots \succsim x_{n-1} \succsim x_n$, a chain witnessing the failure of Suzumura consistency, a contradiction.

$(3)\Rightarrow(1)$. Suppose $(\succsim, \succ)$ is pair-acyclic, and suppose for contradiction that $\succsim$ is not strongly acyclic. Then there exist $x_1, \ldots, x_n \in X$ with $x_1 \succsim x_2 \succsim \ldots \succsim x_n \succsim x_1$ but $x_i \not\sim x_{i+1}$ for some $i$ (indices mod $n$, with $x_{n+1} := x_1$). Then $x_i \succ x_{i+1}$. Cyclically shift so that the strict link comes last: $x_{i+1} \succsim x_{i+2} \succsim \ldots \succsim x_n \succsim x_1 \succsim \ldots \succsim x_i$ followed by $x_i \succ x_{i+1}$. This is a chain witnessing the failure of pair-acyclicity, a contradiction.
\end{proof}

\begin{proof}[Proof of Lemma~\ref{lemma:transitivity_embedding}]
Let $\succsim$ be a binary relation that can be embedded in a preorder $\succsim^*$, with embedding function $f:X\to X^*$. Since $\succsim^*$ is reflexive, $f(x) \succsim^* f(x)$ for all $x \in X$. By definition of embedding, $x \succsim x$ for all $x \in X$, so $\succsim$ is reflexive. Now, let $x, y, z \in X$ be such that $x \succsim y \succsim z$. Then, $f(x) \succsim^* f(y) \succsim^* f(z)$. Since $\succsim^*$ is transitive, $f(x) \succsim^* f(z)$. By definition of embedding $x \succsim z$. Hence, $\succsim$ is transitive. We conclude $\succsim$ is a preorder.
\end{proof}

To prove Theorem~\ref{theorem:main}, we establish a series of auxiliary lemmas.

\begin{lemma}
\label{lemma:reflexivity}
Let $\succsim$ be a binary relation on a set $X$ and define $\succsim' \hspace{2pt}:= \hspace{2pt} \succsim \hspace{2pt} \cup \left\{(x,x) \, \middle| \, x \in X\right\}$. Then $\succsim$ and $\succsim'$ have the same set of Richter--Peleg representations.
\end{lemma}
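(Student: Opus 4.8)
The plan is to compare the symmetric and asymmetric parts of $\succsim$ and $\succsim'$ directly and observe that the two Richter-Peleg conditions impose literally the same constraints on a function $u: X \to \mathbb{R}$ in both cases. Let $\sim, \succ$ denote the symmetric and asymmetric parts of $\succsim$, and let $\sim', \succ'$ denote those of $\succsim'$.

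First I would note that $\succsim'$ differs from $\succsim$ only in diagonal pairs: for distinct $x,y \in X$, we have $x \succsim y$ if and only if $x \succsim' y$. From this it follows at once that the asymmetric parts coincide, $\succ \hspace{2pt} = \hspace{2pt} \succ'$ (no pair $(x,x)$ can lie in an asymmetric part, as that would require both $x \succsim x$ and not $x \succsim x$), and that the symmetric parts agree off the diagonal: for $x \ne y$, $x \sim y$ if and only if $x \sim' y$. The only discrepancy is on the diagonal, where $x \sim' x$ holds for all $x$ while $x \sim x$ may fail.

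Next I would check that these diagonal indifferences are vacuous for the Richter-Peleg conditions: condition (1) applied to a pair $(x,x)$ only requires $u(x) = u(x)$, which is trivially true. Hence imposing condition (1) with respect to $\sim$ is equivalent to imposing it with respect to $\sim'$, and imposing condition (2) with respect to $\succ$ is the same as imposing it with respect to $\succ'$. Combining these observations, $u$ satisfies the two Richter-Peleg conditions for $\succsim$ if and only if it satisfies them for $\succsim'$, so the two relations have the same set of Richter-Peleg representations. Every step here is a direct unwinding of the definitions; the only point requiring care is the bookkeeping of the diagonal, which is precisely where the two relations differ, so I do not anticipate any genuine obstacle.
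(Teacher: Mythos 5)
Your proof is correct and follows essentially the same route as the paper's: both arguments observe that the two relations differ only on the diagonal, that the asymmetric parts coincide, and that the extra diagonal indifferences impose only the vacuous constraint $u(x)=u(x)$. Your version is slightly more explicit in establishing the exact equality $\succ \, = \, \succ'$, but there is no substantive difference.
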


\begin{proof}
Let $u:X\to\mathbb{R}$ be a Richter--Peleg representation of $\succsim$. Then the only additional restriction for $u$ to be a Richter--Peleg representation of $\succsim'$ is that $u(x)=u(x)$, which obviously holds. Conversely, if $u$ is a Richter--Peleg representation of $\succsim'$, then $u$ is also a Richter--Peleg representation of $\succsim$ because $x \succ y$ implies $x \succ' y$ and $x \sim y$ implies $x \sim' y$ for all $x,y \in X$.
\end{proof}

\begin{lemma}
\label{lemma:transitive_closure_1}
If the transitive closure of a strongly acyclic binary relation $\succsim$ admits a Richter--Peleg representation, then $\succsim$ admits a Richter--Peleg representation.
\end{lemma}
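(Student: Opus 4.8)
The plan is to show that \emph{every} Richter-Peleg representation of the transitive closure is already, without any modification, a Richter-Peleg representation of $\succsim$ itself. So let $\succsim^*$ denote the transitive closure of $\succsim$, let $\sim^*$ and $\succ^*$ be its symmetric and asymmetric parts, and suppose $u:X\to\mathbb{R}$ is a Richter-Peleg representation of $\succsim^*$. I would argue that the same $u$ works for $\succsim$.

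First I would record the basic monotonicity fact: since $\succsim$ is contained in $\succsim^*$, any $x,y$ with $x\succsim y$ satisfy $x\succsim^* y$, and because $u$ represents $\succsim^*$ this yields $u(x)\ge u(y)$ (either $x\sim^* y$, giving $u(x)=u(y)$, or $x\succ^* y$, giving $u(x)>u(y)$). The indifference clause of the definition is then immediate: if $x\sim y$, then $x\succsim y$ and $y\succsim x$, so $u(x)\ge u(y)$ and $u(y)\ge u(x)$, hence $u(x)=u(y)$.

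The only real work is the strict clause. Suppose $x\succ y$, so $x\succsim y$ but $y\not\succsim x$; I already have $u(x)\ge u(y)$ and must upgrade this to a strict inequality. The key step is to show $x\succ^* y$, i.e. that $y\not\succsim^* x$. Suppose instead $y\succsim^* x$. By definition of the transitive closure there is a finite chain $y\succsim w_1\succsim\cdots\succsim w_{k-1}\succsim x$; prepending $x\succsim y$ produces a cycle $x\succsim y\succsim w_1\succsim\cdots\succsim w_{k-1}\succsim x$. Strong acyclicity then forces all terms of this cycle to be pairwise indifferent, in particular $x\sim y$, which contradicts $x\succ y$. Hence $y\not\succsim^* x$, so $x\succ^* y$, and therefore $u(x)>u(y)$. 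This completes the argument that $u$ is a Richter-Peleg representation of $\succsim$.

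The main obstacle — and the only place the hypotheses are used nontrivially — is exactly this last step: preserving strictness. Strong acyclicity is precisely what prevents a strict comparison $x\succ y$ in $\succsim$ from collapsing into an indifference $x\sim^* y$ in the transitive closure (which would make $u(x)=u(y)$ permissible); without it the implication would fail. Everything else is bookkeeping about the inclusion of $\succsim$ in $\succsim^*$, and no appeal to Lemma~\ref{lemma:reflexivity} or to any structural property beyond strong acyclicity is needed.
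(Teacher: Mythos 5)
Your proposal is correct and follows essentially the same route as the paper's proof: both establish $x\succsim y \Rightarrow u(x)\ge u(y)$ from the inclusion of $\succsim$ in its transitive closure, and both use strong acyclicity to show that $x\succ y$ implies $x\succ^* y$ (by turning a hypothetical chain witnessing $y\succsim^* x$ into a cycle and deriving a contradiction with $x\succ y$), which yields the strict inequality. No gaps.
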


\begin{proof}
Let $\succsim^t$ denote the transitive closure of $\succsim$ and let $u: X \to \mathbb{R}$ be a Richter--Peleg representation of $\succsim^t$. Suppose $x \succsim y$. Then $x \succsim^t y$, so $u(x) \geq u(y)$. Now suppose $x \succ y$. By Proposition~\ref{proposition:equivalence}, $x \succ^t y$, so $u(x) > u(y)$. We conclude $u$ is a Richter--Peleg representation of $\succsim$.
\end{proof}

\begin{lemma}
\label{lemma:sufficiency}
If a preorder $\succsim$ possesses a countable separating stratification, then $\succsim$ admits a Richter--Peleg representation.
\end{lemma}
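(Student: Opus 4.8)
The plan is to build the representation as a weighted sum $u=\sum_{n\ge 1}2^{-n}g_n$ of a countable family of functions $g_n\colon X\to[0,1]$ that are isotone for $\succsim$ (so $x\succsim y$ implies $g_n(x)\ge g_n(y)$, which forces $x\sim y\Rightarrow g_n(x)=g_n(y)$) and that jointly separate strict comparisons, i.e. for every $x\succ y$ some $g_n$ has $g_n(x)>g_n(y)$. Any such $u$ is a Richter-Peleg representation: being isotone it preserves $\sim$, and when $x\succ y$ the relation $x\succsim y$ makes every summand weakly increase while the separating summand increases strictly, so $u(x)>u(y)$. So the job is to extract, from a countable separating pseudo-stratification $\mathcal A=\{A_1,A_2,\dots\}$, a countable $\succsim$-isotone family of $[0,1]$-valued functions that separates every strict pair.

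I would build these functions as indicators of upper sets. As $\succsim$ is a preorder, for each $z\in X$ the set $U_z:=\{w\in X:w\succ z\}$ is an upper set and $L_z:=\{w\in X:z\succ w\}$ a lower set, so $\mathbf 1_{U_z}$ and $1-\mathbf 1_{L_z}$ are $\succsim$-isotone. Together they already separate every strict pair — but the index set is $\bigcup\mathcal A$, which may be uncountable. Indeed, given $x\succ y$, take a separating witness $z\in A\in\mathcal A$, so $x\succsim z$ and $y\not\succ z$: if $x\succ z$ then $\mathbf 1_{U_z}$ is $1$ at $x$ and $0$ at $y$; and if instead $x\sim z$ (the only other option compatible with $x\succsim z$), then $z\sim x\succ y$ gives $z\succ y$, so $1-\mathbf 1_{L_z}$ is $1$ at $x$ and $0$ at $y$.

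The crux is therefore to cut this down to a countable family without losing separation. The approach I would take is to work block by block — there are only countably many — and, for each $A\in\mathcal A$, replace $\{\mathbf 1_{U_z}:z\in A\}$ (and its $L_z$-counterpart) by countably many aggregated indicators $\mathbf 1_{\{w:\exists z\in S,\ w\succ z\}}$, with $S$ ranging over the parts of a carefully chosen countable partition of $A$; each aggregate is still $\succsim$-isotone. The delicate point is that aggregation preserves separation only if, for a strict pair $x\succ y$ with witness $z\in A$, the part $S$ containing $z$ is ``thin'' enough that $y\succ z'$ holds for no $z'\in S$ (equivalently $S\cap L_y=\emptyset$), not merely for $z'=z$. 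This is exactly where pseudo-stratification is needed: because no two members of a block are strictly ranked, a block is ``coherent'' and can be partitioned into countably many parts compatible with the lower sets $L_y$ of the $y$'s arising in strict pairs witnessed within it, after which countability of $\mathcal A$ bounds the whole family. Working out the right partition of each block and checking that separation survives it is the hard part of the proof; the reductions in the first two paragraphs are routine.
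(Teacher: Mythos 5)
Your overall architecture---a weighted sum $\sum_n 2^{-n} g_n$ of countably many $\succsim$-isotone $\{0,1\}$-valued functions that jointly separate strict pairs, obtained by aggregating indicators of contour sets block by block---is exactly the architecture of the paper's proof, and your first two paragraphs (the reduction, the isotonicity of $\mathbf{1}_{U_z}$ and $1-\mathbf{1}_{L_z}$, and the case split on $x\succ z$ versus $x\sim z$) are correct. But the proposal stops precisely where the proof has to start: the claim that each block admits a countable partition whose aggregated indicators still separate is asserted, not proved, and you yourself label it ``the hard part.'' That step \emph{is} the content of the lemma, so as written this is a correct framing of the difficulty rather than a proof. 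Worse, the deferred step cannot be carried out as you describe it. Take a block $A=\{z_t\}_{t\in[0,1]}$ that is a pairwise-incomparable antichain, with strict pairs $x_t\succ y_t$ for which $z_t$ is the \emph{only} available witness while $y_t\succ z_s$ for every $s\ne t$ (one checks easily that such a partial order exists, is Richter-Peleg representable, and admits $\{A\}$ as a countable separating pseudo-stratification). A part $S\ni z_t$ then works for the pair $(x_t,y_t)$ only if $S\cap\{z_s: s\ne t\}=\emptyset$, i.e.\ $S=\{z_t\}$, so the partition you need is uncountable. Your plan therefore requires a new idea, not just bookkeeping.

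For comparison, the paper does not partition blocks at all: it attaches to each block $A_i$ the two aggregates $\psi(A_i\cap LW(x))$ and $\psi(A_i\cap LS(x))$, where $LW(x)=\{z: x\succsim z\}$, $LS(x)=\{z: x\succ z\}$, and $\psi$ is the indicator of nonemptiness, and then splits on whether the separating witness $z$ satisfies $z\sim y$ or not. In the case $z\sim y$ it uses the fact you gesture at but never deploy---that a pseudo-stratification forbids strictly ranked pairs \emph{within} a single block---to conclude $A_i\cap LS(y)=\emptyset$ while $z\in A_i\cap LS(x)$. You should be aware, however, that the remaining case ($z\not\sim y$) only places $z$ in $A_i\cap LW(x)\setminus LW(y)$ and does not by itself exclude that $A_i\cap LW(y)$ is nonempty via some \emph{other} element of the block, in which case $\psi$ detects no difference; this is exactly the failure mode you identified, and the antichain example above (as well as a four-point example with $A_1=\{z,z'\}$, $z\bowtie y$, $y\succ z'$) shows it can actually occur. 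So the point at which your proposal goes silent is also the genuinely delicate point of the lemma; any complete argument has to confront it head on rather than defer it.
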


\begin{proof}
Let $\mathcal{A} = \left\{A_i \, \middle| \, i \in I\right\}$ be a countable separating stratification of $\succsim$, where $I \subseteq \mathbb{N}$ without loss of generality. For each $i \in I$, consider the weak and strict increasing hulls $\uphull{A_i}$ and $\Uphull{A_i}$ from Section~\ref{sec:stratifications}. Since $\succsim$ is transitive, both are increasing: if $x \in \uphull{A_i}$ (say $x \succsim z$ with $z \in A_i$) and $w \succsim x$, then $w \succsim z$, so $w \in \uphull{A_i}$; and if $x \in \Uphull{A_i}$ (say $x \succ z$) and $w \succsim x$, then $w \succ z$ by \eqref{eq:strict_composition}, so $w \in \Uphull{A_i}$. Define $u: X \to \mathbb{R}$ by
\[
u(x) := \sum_{i \in I} 2^{-i} \left[\mathbf{1}_{\uphull{A_i}}(x) + \mathbf{1}_{\Uphull{A_i}}(x)\right],
\]
where $\mathbf{1}_{(\cdot)}$ denotes the indicator function. The series converges because each bracket lies in $\{0, 1, 2\}$ and $\sum_{i \in I} 2^{-i} \leq \sum_{i \in \mathbb{N}} 2^{-i} = 1$.

We verify that $u$ is a Richter--Peleg representation of $\succsim$. Take $x, y \in X$.

Suppose $x \sim y$. Since both hulls are increasing and $x \succsim y$ and $y \succsim x$, each indicator takes the same value at $x$ and $y$; hence $u(x) = u(y)$.

Suppose $x \succ y$. Then $x \succsim y$, and since both hulls are increasing, $\mathbf{1}_{\uphull{A_i}}(x) \geq \mathbf{1}_{\uphull{A_i}}(y)$ and $\mathbf{1}_{\Uphull{A_i}}(x) \geq \mathbf{1}_{\Uphull{A_i}}(y)$ for every $i$; thus $u(x) \geq u(y)$. It remains to exhibit a strict gain. Because $\mathcal{A}$ is separating, there is $A_j \in \mathcal{A}$ with $x \succsim z_0$ for some $z_0 \in A_j$, so that $x \in \uphull{A_j}$, and with $y \not\succ z'$ for every $z' \in A_j$, so that $y \notin \Uphull{A_j}$. We consider the $j$-th term in two cases, according to whether $y \in \uphull{A_j}$.

If $y \notin \uphull{A_j}$, then, since $x \in \uphull{A_j}$, we get $\mathbf{1}_{\uphull{A_j}}(x) = 1 > 0 = \mathbf{1}_{\uphull{A_j}}(y)$, a strict gain.

If $y \in \uphull{A_j}$, say $y \succsim z_1$ for some $z_1 \in A_j$, then $x \succ y \succsim z_1$ gives $x \succ z_1$ by \eqref{eq:strict_composition}, placing $x \in \Uphull{A_j}$. Since $y \notin \Uphull{A_j}$, this yields the strict gain $\mathbf{1}_{\Uphull{A_j}}(x) = 1 > 0 = \mathbf{1}_{\Uphull{A_j}}(y)$.

In either case the $j$-th term strictly increases while no term decreases, and we conclude $u(x) > u(y)$.
\end{proof}

Note that the proof uses only countability and the separating property; neither the pairwise disjointness nor the rank-coherence required of a stratification enters. Both are nevertheless delivered by the construction in Lemma~\ref{lemma:preorder_case}, so the stronger object is obtained free of charge in the direction where it is harder to come by.

\begin{lemma}
\label{lemma:transitive_closure_2}
If a binary relation $\succsim$ admits a Richter--Peleg representation, then the transitive closure of $\succsim$ admits a Richter--Peleg representation.
\end{lemma}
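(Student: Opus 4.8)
The plan is to show that the very same function $u$ that represents $\succsim$ also serves as a Richter-Peleg representation of its transitive closure $\succsim'$. Let $u : X \to \mathbb{R}$ be a Richter-Peleg representation of $\succsim$, and denote by $\sim'$ and $\succ'$ the symmetric and asymmetric parts of $\succsim'$. No new function needs to be constructed; the work lies entirely in verifying the two defining conditions for $\succsim'$.

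First I would establish the basic monotonicity fact: $x \succsim' y$ implies $u(x) \geq u(y)$. Indeed, $x \succsim' y$ means there is a chain $x = x_1 \succsim x_2 \succsim \cdots \succsim x_n = y$, and along each step either $x_i \sim x_{i+1}$ (giving $u(x_i) = u(x_{i+1})$) or $x_i \succ x_{i+1}$ (giving $u(x_i) > u(x_{i+1})$); in both cases $u$ is non-increasing along the chain, so $u(x) \geq u(y)$. With this in hand, condition (1) is immediate: if $x \sim' y$, then applying the monotonicity fact in both directions yields $u(x) \geq u(y)$ and $u(y) \geq u(x)$, hence $u(x) = u(y)$.

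The main step is condition (2): I need $x \succ' y$ to imply $u(x) > u(y)$. From $x \succsim' y$ the monotonicity fact already gives $u(x) \geq u(y)$, so the whole difficulty is ruling out equality. I would argue by contradiction: suppose $u(x) = u(y)$, and take any chain $x = x_1 \succsim \cdots \succsim x_n = y$ witnessing $x \succsim' y$. Since $u$ is non-increasing along the chain while its endpoints coincide, every step must satisfy $u(x_i) = u(x_{i+1})$. Because $u$ strictly decreases on strict preferences, no step can be strict, so each relation $x_i \succsim x_{i+1}$ must in fact be an indifference $x_i \sim x_{i+1}$. Indifference is symmetric, so reversing the chain gives $y = x_n \succsim x_{n-1} \succsim \cdots \succsim x_1 = x$, which shows $y \succsim' x$. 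Together with $x \succsim' y$, this yields $x \sim' y$, contradicting $x \succ' y$. Hence $u(x) > u(y)$, completing the verification.

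I expect the contradiction argument in condition (2) to be the crux: the key observation is that equality of the endpoint values forces every link of the witnessing chain to collapse to an indifference, which---being symmetric---lets the chain be traversed in reverse and thereby defeats the supposed strictness of $x \succ' y$. Notably, this reasoning relies only on the definitions of the transitive closure and of a Richter-Peleg representation, and not directly on strong acyclicity (although strong acyclicity does hold here by Lemma~\ref{lemma:strong_acyclicity}).
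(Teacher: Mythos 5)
Your proposal is correct and follows essentially the same route as the paper: use the same $u$, show it is non-increasing along any witnessing chain, and observe that if $x \succ' y$ then some link of the chain must be strict (equivalently, as you phrase it, an all-indifference chain would be reversible and would contradict $x \succ' y$), forcing $u(x) > u(y)$. The only difference is cosmetic---you run the key step as a proof by contradiction from $u(x)=u(y)$, whereas the paper directly infers the existence of a strict link from the impossibility of the reversed chain.
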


\begin{proof}
Let $u: X \to \mathbb{R}$ be a Richter--Peleg representation of $\succsim$ and let $\succsim^t$ be the transitive closure of $\succsim$. Suppose $x \succsim^t y$. Then $x=y$, $x \succsim y$, or $x \succsim x_1 \succsim \ldots \succsim x_n \succsim y$. Each of these three cases implies $u(x) \geq u(y)$. Now suppose $x \succ^t y$. We cannot have $x = y$. If $x \succ y$, then $u(x) > u(y)$. If $x \sim x_1 \sim \ldots \sim x_n \sim y$ we would have $x \sim^t y$, which is impossible. Then, we must have $x_k \succ x_{k+1}$ for some $k \in \{0,\ldots,n\}$, where $x_0 = x$ and $x_{n+1} = y$. It follows that $u(x) > u(y)$. We conclude that $u$ is a Richter--Peleg representation of $\succsim^t$.
\end{proof}

The following result is key to proving the necessity part of Theorem~\ref{theorem:main}.

\begin{lemma}
\label{lemma:preorder_case}
If a preorder $\succsim$ admits a Richter--Peleg representation, then $\succsim$ can be embedded in a preorder that possesses a countable separating stratification.
\end{lemma}

\begin{proof}
Let $\succsim$ be a preorder on $X$ with a Richter--Peleg representation $u: X \to \mathbb{R}$. We construct the embedding---which we call the \textit{shadow completion} of $\succsim$---by adjoining to $X$, for each alternative and each rational level strictly below its utility, an auxiliary \textit{shadow} of that alternative at that level.\footnote{The carrier $X^*$ resembles in form the formal-ball model $\mathbf{B}X = X \times \mathbb{R}_+$ of \citet{EdalatHeckmann1998} (see also \citealp{GierzEtAl2003}, Example V-6.8), but the two constructions serve opposite ends. In the formal-ball model the second coordinate is an approximation radius, and the order $(x,r) \le (y,s) \iff d(x,y) \le r - s$ is designed to make $\mathbf{B}X$ a continuous poset that recovers $X$ as its space of maximal points. Here the second coordinate is a utility threshold, the order is inherited from $\succsim$, and the resulting preorder is deliberately neither directed nor equipped with a maximal frontier; its strata instead form a countable family separating the strict comparisons of an incomplete preorder.}

Formally, let
\[
S := \left\{(x,q) \, \middle| \, x \in X,\; q \in \mathbb{Q},\; q < u(x)\right\},
\qquad
X^* := X \sqcup S,
\]
the disjoint union of $X$ with the set $S$ of shadows; we call the elements of $X$ \textit{originals}. Define $u^* : X^* \to \mathbb{R}$ by $u^*(x) := u(x)$ for an original $x$ and $u^*\big((x,q)\big) := q$ for a shadow $(x,q)$. Define a binary relation $\succsim^*$ on $X^*$ by stipulating, for all $x,y \in X$ and all $(x,q),(y,q') \in S$,
\begin{align*}
x \succsim^* y \quad&\Longleftrightarrow\quad x \succsim y, \\
x \succsim^* (y,q) \quad&\Longleftrightarrow\quad x \succsim y, \\
(x,q) \succsim^* y \quad&\text{never holds}, \\
(x,q) \succsim^* (y,q') \quad&\Longleftrightarrow\quad \big(q > q' \text{ and } x \succsim y\big) \text{ or } \big(q = q' \text{ and } x = y\big).
\end{align*}
The decisive feature is the last clause: shadows of \emph{distinct} originals at a \emph{common} level are $\succsim^*$-incomparable. We record an observation used repeatedly: a shadow is never $\succsim^*$-above an original, so every $\succsim^*$-relation is of type original--original, original--shadow, or shadow--shadow. Figure~\ref{fig:shadow} in Section~\ref{sec:impossibility} illustrates the construction.

\emph{$\succsim^*$ is a preorder.} Reflexivity is immediate ($x \succsim x$; and $q=q$, $x=x$). For transitivity, suppose $a \succsim^* b \succsim^* c$. By the observation, neither $b$ nor $c$ can be an original once $a$ is a shadow, and $c$ cannot be an original once $b$ is a shadow; hence only the type patterns original--original--original, original--original--shadow, original--shadow--shadow, and shadow--shadow--shadow can occur. In the first two type patterns, the relations reduce via the first two clauses to $\succsim$ on the originals involved, and transitivity of $\succsim$ gives $a \succsim^* c$. In the original--shadow--shadow case, the second and fourth clauses give $a \succsim^* c$. Finally, in the shadow--shadow--shadow case, write $a=(x,q)$, $b=(y,p)$, $c=(z,r)$; each premise is $(q>p \wedge x \succsim y)\vee(q=p\wedge x=y)$ and $(p>r \wedge y \succsim z)\vee(p=r\wedge y=z)$. If both premises use the equality disjunct, then $q=p=r$ and $x=y=z$, so $a \succsim^* c$; otherwise at least one level inequality is strict, giving $q > r$, while $x \succsim z$ follows from transitivity of $\succsim$ (combining $x \succsim y$ or $x=y$ with $y \succsim z$ or $y=z$). Hence $a \succsim^* c$.

\emph{$\succsim^*$ embeds $\succsim$.} Let $\iota : X \to X^*$ be the inclusion of originals. For originals $x,y$, the pair $(\iota(x),\iota(y))$ is governed exclusively by the first clause, so $\iota(x) \succsim^* \iota(y)$ if and only if $x \succsim y$. Thus $\succsim^*$ is an embedding of $\succsim$ in the sense of Definition~\ref{def:embedding}.

\emph{A countable separating stratification.} For $r \in \mathbb{Q}$, let
\[
A_r := \left\{(x,r) \, \middle| \, x \in X,\; r < u(x)\right\}
\]
be the set of all shadows at level $r$, and put $\mathcal{A} := \left\{A_r \, \middle| \, r \in \mathbb{Q},\ A_r \neq \emptyset\right\}$. Since $\mathcal{A}$ is indexed by $\mathbb{Q}$, it is countable. For given $r \in \mathbb{Q}$, we compute the increasing hull $\uphull{A_r} = \left\{z \in X^* \, \middle| \, z \succsim^* a \text{ for some } a \in A_r\right\}$ formed with respect to $\succsim^*$. For an original $z$,
\[
z \in \uphull{A_r} \iff r < u^*(z),
\]
since $z \succsim^* (x,r)$ reduces to $z \succsim x$, which holds for the witness $x = z$ exactly when $(z,r) \in A_r$, i.e.\ when $r < u(z)$, and conversely $z \succsim x$ with $r < u(x)$ gives $r < u(x) \leq u(z)$ by isotonicity of $u$. For a shadow $z = (a,p)$,
\[
z \in \uphull{A_r} \iff r \leq p,
\]
since $(a,p) \succsim^* (x,r)$ requires $p > r$ (with $a \succsim x$) or $p = r$ (with $a = x$), and conversely $r \leq p$ is witnessed by $x = a$ (using $r \leq p < u(a)$, so $(a,r) \in A_r$).

Each $A_r \in \mathcal{A}$ is nonempty by construction, and the strata are pairwise disjoint: every element of $A_r$ has second coordinate $r$, so a shadow belongs to exactly one $A_r$. Moreover, $\mathcal{A}$ satisfies the rank-coherence condition of a stratification of $\succsim^*$: for shadows, $(x,q) \succ^* (y,q')$ holds only if $q > q'$ (the equality disjunct forces equality of the two shadows, hence not a strict relation). Thus given $A_q, A_{q'} \in \mathcal{A}$ with $a, b \in A_q$ and $a', b' \in A_{q'}$, the premise $a \succ^* a'$ forces $q > q'$, which precludes $b' \succ^* b$ (that would force $q' > q$); in particular, taking $A_q = A_{q'}$ shows no stratum contains a strictly ranked pair.

Finally, $\mathcal{A}$ is separating. Let $z,w \in X^*$ with $z \succ^* w$. We claim $u^*(w) < u^*(z)$. Indeed, if $z,w$ are originals this is the strict part of the Richter--Peleg property of $u$; if $z$ is an original and $w = (b,s)$ a shadow, then $z \succsim^* w$ gives $z \succsim b$, so $u^*(z) = u(z) \geq u(b) > s = u^*(w)$; and if $z = (c,t)$, $w = (b,s)$ are shadows, then $z \succ^* w$ uses the first disjunct, so $t > s$, i.e.\ $u^*(z) > u^*(w)$. (The remaining type, $z$ a shadow and $w$ an original, cannot yield $z \succ^* w$.) By density of $\mathbb{Q}$, choose $r \in \mathbb{Q}$ with $u^*(w) < r < u^*(z)$. By the hull computations, $z \in \uphull{A_r}$ (since $r < u^*(z)$ covers both the original and shadow cases) and $w \notin \uphull{A_r}$ (since $r > u^*(w)$ fails both $r < u^*(w)$ and $r \leq u^*(w)$); moreover $A_r \neq \emptyset$ because $z \in \uphull{A_r}$ exhibits a member of $A_r$. Equivalently, there is $a \in A_r$ with $z \succsim^* a$, while $w \not\succsim^* a'$---hence a fortiori $w \not\succ^* a'$---for every $a' \in A_r$. Thus $A_r \in \mathcal{A}$ separates the pair in the sense of Definition~\ref{def:separating_stratification}, and $\mathcal{A}$ is separating.

We conclude that $\mathcal{A}$ is a countable separating stratification of the preorder $\succsim^*$, into which $\succsim$ embeds.
\end{proof}

We are now ready to present the proof of Theorem~\ref{theorem:main}.

\begin{proof}[Proof of Theorem~\ref{theorem:main}]
By Lemma~\ref{lemma:reflexivity}, there is no loss of generality in assuming $\succsim$ is reflexive, which we do throughout. Let $\succsim^t$ denote its transitive closure, which is a preorder.

\emph{Sufficiency.} Suppose $\succsim$ is strongly acyclic and $\succsim^t$ is separable. By definition, there is an embedding $f$ of $\succsim^t$ into a preorder $\succsim^*$ that possesses a countable separating stratification. By Lemma~\ref{lemma:sufficiency}, $\succsim^*$ admits a Richter--Peleg representation $u^*$; then $u^* \circ f$ is a Richter--Peleg representation of $\succsim^t$.\footnote{If $x \sim^t y$, then $f(x) \sim^* f(y)$, so $u^*(f(x)) = u^*(f(y))$; if $x \succ^t y$, then $f(x) \succ^* f(y)$, so $u^*(f(x)) > u^*(f(y))$.} Since $\succsim$ is strongly acyclic, Lemma~\ref{lemma:transitive_closure_1} yields a Richter--Peleg representation of $\succsim$.

\emph{Necessity.} Suppose $\succsim$ admits a Richter--Peleg representation. By Lemma~\ref{lemma:strong_acyclicity}, $\succsim$ is strongly acyclic. By Lemma~\ref{lemma:transitive_closure_2}, the preorder $\succsim^t$ also admits a Richter--Peleg representation. By Lemma~\ref{lemma:preorder_case}, $\succsim^t$ can therefore be embedded in a preorder possessing a countable separating stratification; that is, $\succsim^t$ is separable.
\end{proof}

\begin{proof}[Proof of Proposition~\ref{proposition:order_density}]
\textbf{Part (i).} Let $Z$ be a $\succsim$-dense subset of $X$ and set $\mathcal{A} := \left\{\{z\} \, \middle| \, z \in Z\right\}$. The singletons are nonempty and pairwise disjoint, and for $A = \{z\}$ and $A' = \{z'\}$ the condition in Definition~\ref{def:stratification} reduces to ``$z \succ z'$ implies $z' \not\succ z$'', which is asymmetry of $\succ$ and therefore holds for every binary relation. Hence $\mathcal{A}$ is a stratification, of cardinality $|Z|$. Note that no property of $\succsim$ has been used. To verify separation, take $x, y \in X$ with $x \succ y$. By $\succsim$-density there is $z \in Z$ with $x \succsim z \succsim y$. Then $x \succsim z$, and $z \succsim y$ gives $\neg(y \succ z)$; so the singleton $\{z\}$ satisfies $x \succsim z$ and $y \not\succ z'$ for its unique element $z' = z$. Hence $\{z\}$ separates $(x,y)$, and $\mathcal{A}$ is separating.

\textbf{Part (ii).} Let $\mathcal{A}$ be a separating stratification of $\succsim$, choose an arbitrary representative $z_A \in A$ for each $A \in \mathcal{A}$, and set $Z := \left\{z_A \, \middle| \, A \in \mathcal{A}\right\}$. Since the strata of $\mathcal{A}$ are pairwise disjoint, distinct strata have distinct representatives, so $|Z| = |\mathcal{A}|$.

We first observe that any two elements $w, w'$ of a common stratum are indifferent. The stratification property rules out strict comparison: were $w \succ w'$, applying Definition~\ref{def:stratification} within that stratum with the first pair $(w, w')$ and the second pair $(w', w)$ would yield $w \not\succ w'$, a contradiction, and a symmetric argument rules out $w' \succ w$. Completeness then upgrades incomparability to indifference, giving $w \sim w'$.

To show $Z$ is $\succsim$-dense, take $x, y \in X$ with $x \succ y$. By separation there are $A \in \mathcal{A}$ and $z \in A$ with $x \succsim z$, and moreover $y \not\succ z'$ for every $z' \in A$. By the observation, $z_A \sim z$, so $x \succsim z \sim z_A$ gives $x \succsim z_A$. Also $y \not\succ z_A$ (as $z_A \in A$), and completeness then forces $z_A \succsim y$: otherwise $y \succsim z_A$, and with $\neg(z_A \succsim y)$ this would give $y \succ z_A$, contradicting $y \not\succ z_A$. Therefore $x \succsim z_A \succsim y$ with $z_A \in Z$, establishing $\succsim$-density.

The cardinality statement follows: part~(i) yields a separating stratification of cardinality $|Z|$ from any $\succsim$-dense $Z$, and part~(ii) a $\succsim$-dense set of cardinality $|\mathcal{A}|$ from any separating stratification $\mathcal{A}$, so the two least cardinalities coincide.
\end{proof}

\begin{proof}[Proof of Theorem~\ref{theorem:impossibility}]
Let $\preceq^*$ denote the natural well-order on the ordinals. Let $\Omega$ denote the first uncountable ordinal and set $X := [0, \Omega)_{\preceq^*}$, the set of all countable ordinals. Since $|X| = \aleph_1 \leq |\mathbb{R}|$, there exists an injection $\varphi : X \to [0, 1]$.

Define a binary relation $\succsim$ on $X$ by
\[
x \succsim y \iff \varphi(x) \geq \varphi(y) \text{ and } x \preceq^* y,
\]
for all $x, y \in X$.

\emph{$\succsim$ is a partial order.} Reflexivity follows from $\varphi(x) \geq \varphi(x)$ and $x \preceq^* x$. For antisymmetry, $x \succsim y$ and $y \succsim x$ jointly imply $\varphi(x) = \varphi(y)$, hence $x = y$ by injectivity of $\varphi$. For transitivity, $x \succsim y$ and $y \succsim z$ imply $\varphi(x) \geq \varphi(y) \geq \varphi(z)$ and $x \preceq^* y \preceq^* z$, hence $x \succsim z$.

\emph{$\varphi$ is a Richter--Peleg representation of $\succsim$.} By antisymmetry, $x \sim y$ implies $x = y$, hence $\varphi(x) = \varphi(y)$. If $x \succ y$, then $\varphi(x) \geq \varphi(y)$ and $x \neq y$, so $\varphi(x) > \varphi(y)$ by injectivity.

\emph{Every separating stratification of $\succsim$ is uncountable.} Suppose, seeking a contradiction, that $\mathcal{A} = \{A_n\}_{n \in \mathbb{N}}$ is a countable separating stratification of $\succsim$. We derive a contradiction in three steps.

\textbf{Step 1.} \emph{Each stratum $A_n$ is countable.}

Fix $n \in \mathbb{N}$. We first observe a useful fact: for any $a, b \in A_n$ with $\varphi(a) > \varphi(b)$, we have $b \prec^* a$. This follows from the incomparability of $a$ and $b$: they are distinct, so $a \not\sim b$ by antisymmetry, and they lie in a common stratum, so neither strictly dominates the other. If we had $a \preceq^* b$ then $a \succsim b$ would follow. Thus $a \not\preceq^* b$, and so $b \prec^* a$.

Now set $r_n := \sup \varphi(A_n)$ and consider two cases.

\emph{Case 1: The supremum $r_n$ is attained.} Let $a \in A_n$ with $\varphi(a) = r_n$. For any $b \in A_n \setminus \{a\}$, injectivity of $\varphi$ yields $\varphi(b) < \varphi(a)$. By the observation above, $b \prec^* a$. Hence $A_n \subseteq [0, a]_{\preceq^*}$, which is countable since $a$ is a countable ordinal.

\emph{Case 2: The supremum $r_n$ is not attained.} There exists a sequence $(a_m)_{m \in \mathbb{N}}$ in $A_n$ with $\varphi(a_m) \to r_n$ and $\varphi(a_m) < r_n$ for all $m$. For any $b \in A_n$, $\varphi(b) < r_n$ implies there exists $m$ with $\varphi(b) < \varphi(a_m)$. By the observation above, $b \prec^* a_m$, so $b \in [0, a_m)_{\preceq^*}$. Therefore
\[
A_n \subseteq \bigcup_{m \in \mathbb{N}} [0, a_m)_{\preceq^*},
\]
which is a countable union of countable initial segments, hence countable.

\textbf{Step 2.} \emph{There exists $\bar{x} \in X$ with $z \prec^* \bar{x}$ for every $z \in \bigcup_n A_n$.}

By Step 1, $\bigcup_n A_n$ is a countable union of countable sets, hence countable. Since $\Omega$ is the first uncountable ordinal, every countable subset of $X = [0, \Omega)_{\preceq^*}$ is bounded above in $\preceq^*$ by some element of $X$. Take any such bound and let $\bar{x}$ denote its $\preceq^*$-successor (itself a countable ordinal, hence in $X$). Then $z \prec^* \bar{x}$ for every $z \in \bigcup_n A_n$.

\textbf{Step 3.} \emph{Contradiction.}

Consider the restriction of $\varphi$ to $[\bar{x}, \Omega)_{\preceq^*}$. This restriction cannot be $\preceq^*$-isotone (in the sense that $x \preceq^* y$ implies
$\varphi(x) \leq \varphi(y)$): if it were, injectivity of $\varphi$ would give
$\varphi(t) < \varphi(t+1)$ for every $t \in [\bar{x}, \Omega)_{\preceq^*}$, and since $t+1$
is the $\preceq^*$-successor of $t$, isotonicity would give $\varphi(t+1) \leq \varphi(t')$ whenever $t \prec^* t'$; the intervals $(\varphi(t), \varphi(t+1))$,
$t \in [\bar{x}, \Omega)_{\preceq^*}$, are then pairwise disjoint and nonempty. As
$[\bar{x}, \Omega)_{\preceq^*}$ is uncountable, this yields an uncountable family of disjoint open intervals in $\mathbb{R}$, which is impossible (cf.\ Example~\ref{example:lex}).

Hence there exist $x', x'' \in [\bar{x}, \Omega)_{\preceq^*}$ with $x' \prec^* x''$ and $\varphi(x'') < \varphi(x')$. We claim $x' \succ x''$. Indeed, $\varphi(x') > \varphi(x'')$ and $x' \preceq^* x''$ give $x' \succsim x''$; conversely, $x'' \succsim x'$ would require $\varphi(x'') \geq \varphi(x')$, contradicting our choice. Hence $x' \succ x''$.

Since $\mathcal{A}$ is separating, there exist $n \in \mathbb{N}$ and $z \in A_n$ such that $x' \succsim z$. From $x' \succsim z$, we have $x' \preceq^* z$. Combined with $\bar{x} \preceq^* x'$, this gives $\bar{x} \preceq^* z$. But $z \in \bigcup_n A_n$ implies $z \prec^* \bar{x}$ by Step 2, contradicting $\bar{x} \preceq^* z$.
\end{proof}

\begin{proof}[Proof of Proposition~\ref{proposition:no_embedding}] The nonempty level sets $u^{-1}(t)$, $t \in u(X)$, are pairwise disjoint. If $x \in u^{-1}(t)$ and $x' \in u^{-1}(t')$ satisfy $x \succ x'$, then $u(x) > u(x')$, that is, $t > t'$; and then no $y \in u^{-1}(t)$ and $y' \in u^{-1}(t')$ can satisfy $y' \succ y$, since that would give $t' > t$. The level sets therefore form a stratification of $\succsim$, of which those indexed by rationals are a countable subfamily.

Let $x \succ y$. Both $x$ and $y$ belong to $L(x)$, by reflexivity and by $x \succsim y$ respectively, so the set $u(L(x))$ contains $u(y)$ and $u(x)$. Since $L(x)$ is connected and $u$ is continuous, $u(L(x))$ is a connected subset of $\mathbb{R}$ and therefore contains the whole interval $[u(y), u(x)]$, which is nondegenerate because $u(x) > u(y)$ and hence contains a rational $q$. Choose $z \in L(x)$ with $u(z) = q$: then $u^{-1}(q)$ is nonempty, belongs to the family, and satisfies $x \succsim z$. Finally, no $w \in u^{-1}(q)$ satisfies $y \succ w$, since that would give $u(y) > u(w) = q \geq u(y)$. Hence $u^{-1}(q)$ separates the pair $(x,y)$ in the sense of Definition~\ref{def:separating_stratification}.
\end{proof} 

The proof of Proposition~\ref{proposition:complete_no_gap} rests on the following observation, which is where completeness enters.

\begin{lemma}
\label{lemma:chain_density}
Let $\succsim$ be a complete preorder on $X$ and let $\mathcal{C}$ be a family of increasing subsets of $X$ such that every $x \succ y$ satisfies $x \in C$ and $y \notin C$ for some $C \in \mathcal{C}$. If $\mathcal{C}$ is infinite, then $X$ has a $\succsim$-dense subset of cardinality at most $|\mathcal{C}|$. 
\end{lemma}

\begin{proof} 
We first note that $\mathcal{C}$ is a chain under set inclusion.\footnote{That is, either $C \subseteq C'$ or $C' \subseteq C$ holds for all $C, C' \in \mathcal{C}$.} Indeed, if $b \in C \setminus C'$ and $b' \in C' \setminus C$, then completeness gives $b \succsim b'$ or $b' \succsim b$; the first places $b \in C'$, since $b' \in C'$ and $C'$ is increasing, and the second places $b' \in C$.

Write $\kappa := |\mathcal{C}|$ and construct $Z \subseteq X$ as follows. For each $C \in \mathcal{C}$ possessing a $\succsim$-minimum, place one such minimum in $Z$. For each pair $D \subsetneq C$ in $\mathcal{C}$, place one element $z_{C,D} \in C \setminus D$ in $Z$. Then $|Z| \leq \kappa + \kappa^2 = \kappa$.

Let $x \succ y$ and choose $C \in \mathcal{C}$ with $x \in C$ and $y \notin C$. Suppose first that $C$ has a $\succsim$-minimum $m$. Then $x \in C$ gives $x \succsim m$, while $y \succsim m$ would place $y \in C$; completeness therefore gives $m \succ y$, and $x \succsim m \succsim y$ with $m \in Z$.

Suppose instead that $C$ has no $\succsim$-minimum. Then $x$ is not one, so some $c \in C$ satisfies $\neg(c \succsim x)$ and hence, by completeness, $x \succ c$; write $x_1 := c \in C$. Applying the separating hypothesis to the pair $(x, x_1)$ yields $D \in \mathcal{C}$ with $x \in D$ and $x_1 \notin D$. Since $x_1 \in C$ we cannot have $C \subseteq D$, so $D \subsetneq C$ by the chain property, and $y \notin D$ because $D \subseteq C$. Moreover $x_1 \in C \setminus D$, so $z := z_{C,D}$ is defined. Now $x \in D$, $z \notin D$ and $D$ increasing give $\neg(z \succsim x)$, hence $x \succ z$ by completeness; and $z \in C$, $y \notin C$ and $C$ increasing give $\neg(y \succsim z)$, hence $z \succ y$. Thus $x \succsim z \succsim y$ with $z \in Z$.
\end{proof}

\begin{proof}[Proof of Proposition~\ref{proposition:complete_no_gap}]
The inequality $\mathfrak{s}^*(\succsim) \leq \mathfrak{s}(\succsim)$ holds because $\succsim$ embeds in itself. For the converse, let $f$ be an embedding of $\succsim$ into a preorder $\succsim^*$ carrying a separating stratification $\mathcal{A}^*$ of cardinality $\kappa$. For each $A \in \mathcal{A}^*$ set \[ D(A) := f^{-1}\left(\uphull{A}\right), \qquad E(A) := f^{-1}\left(\Uphull{A}\right). \] Both are increasing subsets of $X$: if $w \in D(A)$, say $f(w) \succsim^* a$ with $a \in A$, and $w' \succsim w$, then $f(w') \succsim^* f(w) \succsim^* a$, and similarly for $E(A)$, since $\Uphull{A}$ is increasing in $\succsim^*$ by \eqref{eq:strict_composition}.

These sets separate every strict pair of $\succsim$. Let $x \succ y$. Since $f$ is an embedding, $f(x) \succ^* f(y)$, so there is $A \in \mathcal{A}^*$ with $f(x) \in \uphull{A}$ and $f(y) \notin \Uphull{A}$. If $f(y) \notin \uphull{A}$, then $D(A)$ contains $x$ and not $y$. Otherwise $f(y) \succsim^* a$ for some $a \in A$, and $f(x) \succ^* f(y) \succsim^* a$ gives $f(x) \succ^* a$ by \eqref{eq:strict_composition}, so that $E(A)$ contains $x$ and not $y$.

The family $\left\{D(A) \, \middle| \, A \in \mathcal{A}^*\right\} \cup \left\{E(A) \, \middle| \, A \in \mathcal{A}^*\right\}$ therefore separates every strict pair and has cardinality at most $2\kappa$. It must be infinite: by the chain property established at the start of the proof of Lemma~\ref{lemma:chain_density}, a finite separating family of $n$ increasing sets is a chain and therefore partitions $X$ into at most $n+1$ blocks, no two elements of a block are strictly ranked, and completeness then makes each block an indifference class, so that $X$ would have finitely many indifference classes and $\mathfrak{s}(\succsim)$ would be finite by Proposition~\ref{proposition:order_density}, contrary to hypothesis. Lemma~\ref{lemma:chain_density} therefore yields a $\succsim$-dense subset of cardinality at most $2\kappa = \kappa$, whence $\mathfrak{s}(\succsim) \leq \kappa$ by Proposition~\ref{proposition:order_density}. Minimizing over $\succsim^*$ gives $\mathfrak{s}(\succsim) \leq \mathfrak{s}^*(\succsim)$.
\end{proof}

\begin{proof}[Proof of Corollary~\ref{corollary:preorder}]
Since $\succsim$ is a preorder, it is strongly acyclic. Moreover, $\succsim^t \, = \, \succsim$. Hence, the desired equivalence follows from Theorem~\ref{theorem:main}.
\end{proof}

\begin{proof}[Proof of Corollary~\ref{corollary:countable}]
Since $\succsim^t$ is reflexive, Example~\ref{example:separating_singletons} shows that $\left\{\{x\}\mid x \in X\right\}$ is a separating stratification of $\succsim^t$; it is countable because $X$ is. Hence $\succsim^t$ is separable, and Theorem~\ref{theorem:main} provides the desired equivalence.
\end{proof}

\begin{proof}[Proof of Proposition~\ref{proposition:optimality1}]
Let $\succsim$ be a preorder on $X$ with a Richter--Peleg representation $v:X \to \mathbb{R}$. If $A=\emptyset$, the result holds using $u=v$, so suppose $A \ne \emptyset$. We can assume $v(X)\subseteq (0,1)$ without loss of generality. Define
\[
u(x) :=
\begin{cases}
1 & \exists y \in \mathcal{M}(\succsim,A): y \sim x,\\
v(x) & \exists y \in A : y \succ x,\\
1+ v(x) & \text{else,}
\end{cases}
\]
for each $x \in X$. The three cases are mutually exclusive. Cases~2 and~3 are disjoint by construction, and cases~1 and~2 cannot co-occur: if $w \in \mathcal{M}(\succsim,A)$ with $w \sim x$ and $y \succ x$ for some $y \in A$, then $y \succ x \succsim w$ gives $y \succ w$ by \eqref{eq:strict_composition}, contradicting $w \in \mathcal{M}(\succsim,A)$. So exactly one case holds and $u$ is well defined.

We now show $u$ is a Richter--Peleg representation of $\succsim$. Take $x,y \in X$. We need to show $x \sim y$ implies $u(x)=u(y)$ and $x \succ y$ implies $u(x)>u(y)$.

First assume $x \sim y$. Both $x$ and $y$ satisfy the same condition in the piecewise definition of $u$: either $u(x)=u(y)=1$, or $u(x)=v(x), u(y)=v(y)$, or $u(x)=1+v(x), u(y)=1+v(y)$. In each case $u(x)=u(y)$.

Next assume $x \succ y$. Then one of two cases holds: (I) $u(y) \in \{1, 1+v(y)\}$, or (II) $u(y)=v(y)$. In case (I), the construction of $u$ implies that $z \not\succ y$ for all $z \in A$. It follows that $z \not\succsim x$ for all $z \in A$: were $z \succsim x \succ y$ for some $z \in A$, \eqref{eq:strict_composition} would give $z \succ y$, a contradiction. Since both case~1 and case~2 in the definition of $u$ require some element of $A$ to be $\succsim$-above $x$---a maximal element indifferent to $x$ in the first, a strict dominator in the second---the alternative $x$ falls in case~3, and $u(x)=1+v(x)>1+v(y)\geq u(y)$. In case (II), $u(x) \in \{1, v(x), 1+v(x)\}$, implying $u(x)>v(y)=u(y)$. In either case, $u(x)>u(y)$.

To establish ${\mathcal{M}(\succsim, A)} = \argmax_{x \in A} u(x)$, note first that $u(x)=1+v(x)$ never holds for $x \in A$ (such an $x$ would be $\succsim$-maximal in $A$, hence fall in case~1 with $y = x$), so either $u(x) = 1$ or $u(x)=v(x)< 1$ for every $x \in A$. Then consider two cases. If $\mathcal{M}(\succsim, A) = \emptyset$, then $u(x)=v(x)$ for every $x \in A$, so $\argmax_{x \in A} u(x) = \argmax_{x \in A} v(x)$; this set is empty, since any $v$-maximizer over $A$ is $\succsim$-maximal in $A$---no $y \in A$ has $y \succ x^*$, as that would give $v(y)>v(x^*)$---and would thus belong to $\mathcal{M}(\succsim,A)$. Hence $\argmax_{x\in A}u(x)=\emptyset=\mathcal{M}(\succsim,A)$. If $\mathcal{M}(\succsim, A)\ne\emptyset$, then 
\[
\argmax_{x \in A} u(x) = \left\{x \in A \, \middle| \, u(x)=1\right\} = \left\{x \in A \, \middle| \, \exists y \in \mathcal{M}(\succsim, A): y \sim x\right\} = \mathcal{M}(\succsim, A).
\]
For the final equality, $\supseteq$ holds by reflexivity; for $\subseteq$, if $x \in A$ and $x \sim w$ with $w \in \mathcal{M}(\succsim,A)$, then any $y \in A$ with $y \succsim x$ satisfies $y \succsim w$, so $w \succsim y$ by maximality of $w$, whence $x \succsim y$; thus $x \in \mathcal{M}(\succsim,A)$.
\end{proof}

\begin{proof}[Proof of Proposition~\ref{proposition:optimality2}]
Fix $x \in X$ and let $v$ be any Richter--Peleg representation of $\succsim$ satisfying $v(X)\subseteq (0,1)$. Define
\[
u(y) :=
\begin{cases}
1 & y \sim x,\\
1+ v(y) & y \succ x,\\
v(y) & \text{else.}
\end{cases}
\]
Take $y, y' \in X$. If $y \sim y'$, then one of three cases holds: (I) $y \sim x$ and $y' \sim x$, (II) $y \succ x$ and $y' \succ x$, or (III) $y \not\succsim x$ and $y' \not\succsim x$. In case (I), $u(y)=1=u(y')$. In case (II), $u(y)=1+v(y)=1+v(y')=u(y')$. In case (III), $u(y)=v(y)=v(y')=u(y')$. In all three cases, $u(y)=u(y')$. If $y \succ y'$, then one of two cases holds: (I) $y' \succsim x$ or (II) $y' \not\succsim x$. In case (I), $y \succ y' \succsim x$ gives $y \succ x$ by \eqref{eq:strict_composition}, so $u(y)=1+v(y) > 1+v(y') \geq u(y')$. In case (II), $u(y) \geq v(y) > v(y') = u(y')$. In both cases, we have $u(y)>u(y')$. We conclude that $u$ is a Richter--Peleg representation of $\succsim$.

Now take any $A \subseteq X$ such that $x \in \mathcal{M}\left(\succsim, A\right)$. For any $y \in A$, $y \not \succ x$. Hence $u(x) = 1 \geq u(y)$.
\end{proof}

\begin{proof}[Proof of Proposition~\ref{proposition:suppes_sen}]
We use the following fact: if $\pi \in \Pi$ and $\pi y \geq y$, then $\pi y = y$. Indeed, $\pi$ permutes a finite set $F$ of coordinates and fixes the rest; the sum of $y$ over $F$ is invariant under $\pi$, and every coordinate weakly increases, so all are unchanged.

Suppose $W$ is a Richter--Peleg representation of $\succsim_{S}$. For every $\pi \in \Pi$ we have $x \succsim_{S} \pi x$ (take the permutation $\pi$) and $\pi x \succsim_{S} x$ (take $\pi^{-1}$), so $x \sim_{S} \pi x$ and $W(\pi x) = W(x)$, which is finite anonymity. If $x \geq y$ and $x \neq y$, then $x \succsim_{S} y$ via the identity permutation; were $y \succsim_{S} x$, some $\pi \in \Pi$ would satisfy $\pi y \geq x \geq y$, whence $\pi y = y$ by the fact above and therefore $x = y$, a contradiction. So $x \succ_{S} y$ and $W(x) > W(y)$, which is strong Pareto.

Conversely, suppose $W$ satisfies strong Pareto and finite anonymity. Then $\pi x \geq y$ implies $W(x) = W(\pi x) \geq W(y)$, with strict inequality unless $\pi x = y$. If $x \sim_{S} y$, applying this in both directions gives $W(x) = W(y)$. If $x \succ_{S} y$, take $\pi \in \Pi$ with $\pi x \geq y$; were $\pi x = y$, then $\pi^{-1} y = x$ would give $y \succsim_{S} x$, a contradiction, so $\pi x \neq y$ and $W(x) > W(y)$. 
\end{proof}


\begin{thebibliography}{99}

\bibitem[Alcantud et al.(2016)]{Alcantud2016}
Alcantud, J.~C.~R., Bosi, G., and Zuanon, M. (2016).
Richter--Peleg multi-utility representations of preorders.
\textit{Theory and Decision}, 80(3): 443--450.

\bibitem[Aumann(1962)]{Aumann1962}
Aumann, R.~J. (1962).
Utility theory without the completeness axiom.
\textit{Econometrica}, 30(3): 445--462.

\bibitem[Aumann(1964)]{Aumann1964}
Aumann, R.~J. (1964).
Utility theory without the completeness axiom: A correction.
\textit{Econometrica}, 32(1--2): 210--212.

\bibitem[Banerjee and Dubey(2010)]{BanerjeeDubey2010} Banerjee, K., and Dubey, R.~S. (2010). On multi-utility representation of equitable intergenerational preferences. In B.~Basu, S.~R. Chakravarty, B.~K. Chakrabarti, and K.~Gangopadhyay, eds., \textit{Econophysics and Economics of Games, Social Choices and Quantitative Techniques}, pp.~175--180. Springer, Milan.

\bibitem[Basu and Mitra(2003)]{BasuMitra2003} Basu, K., and Mitra, T. (2003). Aggregating infinite utility streams with intergenerational equity: The impossibility of being Paretian. \textit{Econometrica}, 71(5): 1557--1563.

\bibitem[Bosi et al.(2010)]{Bosi2010}
Bosi, G., Caterino, A., and Ceppitelli, R. (2010).
Existence of continuous utility functions for arbitrary binary relations: Some sufficient conditions.
\textit{Tatra Mountains Mathematical Publications}, 46: 15--27.

\bibitem[Bosi and Herden(2012)]{BosiHerden2012}
Bosi, G., and Herden, G. (2012).
Continuous multi-utility representations of preorders.
\textit{Journal of Mathematical Economics}, 48(4): 212--218.

\bibitem[Bossert and Suzumura(2010)]{BossertSuzumura2010}
Bossert, W., and Suzumura, K. (2010).
\textit{Consistency, Choice, and Rationality}.
Harvard University Press.

\bibitem[Bridges and Mehta(1995)]{BridgesMehta1995} Bridges, D.~S., and Mehta, G.~B. (1995). \textit{Representations of Preference Orderings}. Lecture Notes in Economics and Mathematical Systems, vol.~422. Springer, Berlin.

\bibitem[Chambers and Echenique(2016)]{ChambersEchenique2016}
Chambers, C.~P., and Echenique, F. (2016).
\textit{Revealed Preference Theory}.
Econometric Society Monographs, Cambridge University Press.

\bibitem[Debreu(1954)]{Debreu1954}
Debreu, G. (1954).
Representation of a preference ordering by a numerical function.
In R.~M. Thrall, C.~H. Coombs, and R.~L. Davis, eds., \textit{Decision Processes}, pp.~159--165. Wiley, New York.

\bibitem[Debreu(1964)]{Debreu1964}
Debreu, G. (1964).
Continuity properties of Paretian utility.
\textit{International Economic Review}, 5(3): 285--293.

\bibitem[Dubra et al.(2004)]{Dubra2004}
Dubra, J., Maccheroni, F., and Ok, E.~A. (2004).
Expected utility theory without the completeness axiom.
\textit{Journal of Economic Theory}, 115(1): 118--133.

\bibitem[Edalat and Heckmann(1998)]{EdalatHeckmann1998}
Edalat, A., and Heckmann, R. (1998).
A computational model for metric spaces.
\textit{Theoretical Computer Science}, 193(1--2): 53--73.

\bibitem[Eilenberg(1941)]{Eilenberg1941}
Eilenberg, S. (1941).
Ordered topological spaces.
\textit{American Journal of Mathematics}, 63(1): 39--45.

\bibitem[Evren(2014)]{Evren2014}
Evren, \"O. (2014).
Scalarization methods and expected multi-utility representations.
\textit{Journal of Economic Theory}, 151: 30--63.

\bibitem[Evren and Ok(2011)]{EvrenOk2011}
Evren, \"O., and Ok, E.~A. (2011).
On the multi-utility representation of preference relations.
\textit{Journal of Mathematical Economics}, 47(4--5): 554--563.

\bibitem[Fishburn(1970)]{Fishburn1970}
Fishburn, P.~C. (1970).
\textit{Utility Theory for Decision Making}.
Wiley, New York.

\bibitem[Gierz et al.(2003)]{GierzEtAl2003}
Gierz, G., Hofmann, K.~H., Keimel, K., Lawson, J.~D., Mislove, M., and Scott, D.~S. (2003).
\textit{Continuous Lattices and Domains}.
Encyclopedia of Mathematics and its Applications, vol.~93. Cambridge University Press, Cambridge.

\bibitem[Gorno(2017)]{Gorno2017}
Gorno, L. (2017).
A strict expected multi-utility theorem.
\textit{Journal of Mathematical Economics}, 71: 92--95.

\bibitem[Hack et al.(2022)]{HackBraunGottwald2022}
Hack, P., Braun, D.~A., and Gottwald, S. (2022).
Representing preorders with injective monotones.
\textit{Theory and Decision}, 93(4): 663--690.

\bibitem[Hack et al.(2023)]{HackBraunGottwald2023}
Hack, P., Braun, D.~A., and Gottwald, S. (2023).
The classification of preordered spaces in terms of monotones: Complexity and optimization.
\textit{Theory and Decision}, 94(4): 693--720.

\bibitem[Herden(1989a)]{Herden1989a}
Herden, G. (1989a).
On the existence of utility functions.
\textit{Mathematical Social Sciences}, 17: 297--313.

\bibitem[Herden(1989b)]{Herden1989b}
Herden, G. (1989b).
On the existence of utility functions II.
\textit{Mathematical Social Sciences}, 18: 107--117.

\bibitem[Luce(1956)]{Luce1956}
Luce, R.~D. (1956).
Semiorders and a theory of utility discrimination.
\textit{Econometrica}, 24(2): 178--191.

\bibitem[Majumdar and Sen(1976)]{MajumdarSen1976}
Majumdar, M., and Sen, A.~K. (1976).
A note on representing partial orderings.
\textit{Review of Economic Studies}, 43(3): 543--545.

\bibitem[Nishimura and Ok(2016)]{NishimuraOk2016}
Nishimura, H., and Ok, E.~A. (2016).
Utility representation of an incomplete and nontransitive preference relation.
\textit{Journal of Economic Theory}, 166: 164--185.

\bibitem[Ok(2002)]{Ok2002}
Ok, E.~A. (2002).
Utility representation of an incomplete preference relation.
\textit{Journal of Economic Theory}, 104(2): 429--449.

\bibitem[Peleg(1970)]{Peleg1970}
Peleg, B. (1970).
Utility functions for partially ordered topological spaces.
\textit{Econometrica}, 38(1): 93--96.

\bibitem[Richter(1966)]{Richter1966}
Richter, M.~K. (1966).
Revealed preference theory.
\textit{Econometrica}, 34(3): 635--645.

\bibitem[Riella(2015)]{Riella2015}
Riella, G. (2015).
On the representation of incomplete preferences under uncertainty with indecisiveness in tastes and beliefs.
\textit{Economic Theory}, 58(3): 571--600.

\bibitem[Suzumura(1976)]{Suzumura1976}
Suzumura, K. (1976).
Remarks on the theory of collective choice.
\textit{Economica}, 43(172): 381--390.

\bibitem[Svensson(1980)]{Svensson1980} Svensson, L.-G. (1980). Equity among generations. \textit{Econometrica}, 48(5): 1251--1256.

\bibitem[Walker(1977)]{Walker1977}
Walker, M. (1977).
On the existence of maximal elements.
\textit{Journal of Economic Theory}, 16(2): 470--477.

\bibitem[White(1980)]{White1980}
White, D.~J. (1980).
Notes in decision theory: Optimality and efficiency II.
\textit{European Journal of Operational Research}, 4(6): 426--427.

\end{thebibliography}
\end{document}